\documentclass{llncs}

\usepackage[usenames]{color}
\usepackage{graphicx}
\usepackage{amsmath}
\usepackage{amssymb}
\usepackage{envmath}
\usepackage{amsfonts}
\usepackage{fullpage}
\usepackage{framed}
\usepackage{xkvltxp}
\usepackage{subfigure}
\usepackage[backref=true,colorlinks=true]{hyperref}

\newcommand{\eps}{\ensuremath{\varepsilon}}
\newcommand{\dspace}{\mathcal{S}}
\newcommand{\runs}{\ensuremath{\operatorname{runs}}}
\newcommand{\bwt}{\ensuremath{\operatorname{bwt}}}

\newcommand{\polylog}{\ensuremath{\operatorname{polylog}}}
\newcommand{\patrascu}{P\v{a}tra\c{s}cu}
\newcommand{\poly}{\ensuremath{\operatorname{poly}}}

\newcommand{\strlen}{L}
\newcommand{\blocksize}{B}
\newcommand{\blockcount}{N}
\newcommand{\instcount}{H}

\newcommand{\seta}{{\mathbf X}}
\newcommand{\setb}{{\mathbf Y}}

\newcommand{\setsize}{\blockcount}

\let\doendproof\endproof
\renewcommand\endproof{~\hfill\qed\doendproof}

\title{Data Structure Lower Bounds on Random Access to Grammar-Compressed Strings}
\author{Shiteng Chen\inst{1} \and Elad Verbin \inst{2} \and Wei Yu\inst{2}}

\institute{Tsinghua University, China \\ \email{shitengchen@gmail.com} \and Aarhus University\thanks{The authors acknowledge support from the Danish National Research Foundation and The National Science Foundation of China (under the grant
61061130540) for the Sino-Danish Center for the Theory of Interactive Computation, within which part of this work was performed.}, Denmark \\ \email{eladv@cs.au.dk,yuwei@cs.au.dk}}

\begin{document}

\maketitle

\begin{abstract}
In this paper we investigate the problem of building a static data structure that represents a string $s$ using space close to its compressed size, and allows fast access to individual characters of $s$. This type of structures was investigated by the recent paper of Bille et al.~~\cite{bille-2010}. Let $n$ be the size of a context-free grammar that derives a unique string $s$ of length $\strlen$. (Note that $\strlen$ might be exponential in $n$.) Bille et al.\ showed a data structure that uses space $O(n)$ and allows to query for the $i$-th character of $s$ using running time $O(\log \strlen)$. Their data structure works on a word RAM with a word size of $\log \strlen$ bits.

Here we prove that for such data structures, if the space is $\poly(n)$, then the query time must be at least $(\log \strlen)^{1-\eps}/\log \dspace$ where $\dspace$ is the space used, for any constant $\eps>0$. As a function of $n$, our lower bound is $\Omega(n^{1/2-\eps})$. Our proof holds in the cell-probe model with a word size of $\log \strlen$ bits, so in particular it holds in the word RAM model. We show that no lower bound significantly better than $n^{1/2-\eps}$ can be achieved in the cell-probe model, since there is a data structure in the cell-probe model that uses $O(n)$ space and achieves $O(\sqrt{n \log n})$ query time. The ``bad'' setting of parameters occurs roughly when $\strlen=2^{\sqrt{n}}$. We also prove a lower bound for the case of not-as-compressible strings, where, say, $\strlen=n^{1+\eps}$. For this case, we prove that if the space is $n \cdot polylog(n)$, then the query time must be at least $\Omega(\log n / \log \log n)$.
%(For ``balanced'' grammars this upper bound can be made to work in the word RAM as well.)

The proof works by reduction to communication complexity, namely to the LSD  (Lopsided Set Disjointness) problem, recently employed by \patrascu\ and others.
We prove lower bounds also for the case of LZ-compression and Burrows-Wheeler (BWT) compression. All of our lower bounds hold even when the strings are over an alphabet of size $2$ and hold even for randomized data structures with 2-sided error.
\end{abstract}

\section{Introduction}

In many modern databases, strings are stored in compressed form. Many compression schemes are grammar-based, in particular Lempel-Ziv~\cite{lempel1976complexity,ziv1977universal,ziv1978compression} and its variants, as well as
Run-Length Encoding. Another big family of textual compressors are BWT (Burrow-Wheeler Transformation~\cite{burrows1994block}) based compressors, like the one used by the software bzip2.

A natural desire is to store a text using space close to its compressed size, but to still allow fast access to individual characters: can we do something faster than simply extracting the whole text each time we need to access a character? This question was recently answered in the affirmative by Bille et al.~\cite{bille-2010} and by Claude and Navarro~\cite{claude2009self}. These two works investigate the problem of storing a string that can be represented by a small CFG (context-free grammar) of size $n$, while allowing some basic stringology operations, in particular random access to a character in the text. The data structure of Bille et al.~\cite[Theorem~1]{bille-2010} stores the text in space linear in $n$, while allowing access to an individual character in time $O(\log \strlen)$, where $\strlen$ is the text's \emph{uncompressed} size. (The result of Bille et. al.\ also allows other query operations such as pattern matching; we do not discuss those in this paper.) But is that the best upper bound possible?

In this paper we show a $(\log \strlen)^{1-\eps}$ lower bound on the query time whenever the space used by the data structure is $poly(n)$, showing that the result of Bille et al.\ is close to optimal. Our lower bounds are proved in the cell-probe model of Yao~\cite{yao81}, with word size $\log \strlen$, therefore they in particular hold for the model studied by Bille et al.~\cite{bille-2010}, since the cell-probe model is strictly stronger than the RAM model. Our lower bound is proved by a reduction to Lopsided Set Disjointness (LSD), a problem for which \patrascu\ has recently proved an essentially-tight randomized lower bound~\cite{patrascu8unifying}. The idea is to prove that grammars are rich enough to effectively ``simulate'' a disjointness query: our class of grammars, presented in Section~\ref{sec:reduction_from_lsd}, might be of independent interest as a class of ``hard'' grammars for other purposes as well.

In terms of $n$, our lower bound is $n^{1/2-\eps}$. The results of Bille et al.\ imply an upper bound of $O(n)$ on the query time, since $\log \strlen \le n$, therefore in terms of $n$ there is a curious quadratic gap between our lower bound and Bille et al.'s upper bound. We show that this gap can be closed by giving a better data structure: we show a data structure which takes space $O(n)$ and has query time $O(\sqrt{n \log n})$, showing that no significantly better lower bound is possible. This data structure, however, comes with a big caveat -- it runs in the highly-unrealistic cell-probe model, thus serving more as an impossibility proof for lower bounds than as a reasonable upper bound. The question remains open of whether such a data structure exists in the more realistic word RAM model.

Our lower bound holds for a particular, ``worst-case'', dependence of $\strlen$ on $n$. Namely, $\strlen$ is roughly $2^{\sqrt{n}}$. It might also be interesting to explicitly limit the range of allowed parameters to other regimes, for example to non-highly-compressible text; in such a regime it might be that $\strlen =n^{1+\eps}$. The above result does not imply any lower bound for this case. Thus, we show in another result that for any data structure in that regime, if the space is $n \cdot \polylog n$, then the query time must be $\Omega(\log n/\log \log n)$. This lower bound holds, again, in the cell probe model with words of size $\log n$ bits, and is proved by a reduction to two-dimensional range counting (which, once again, was lower bounded by a reduction to LSD~\cite{patrascu8unifying}). To this end, we prove a new lower bound for two-dimensional range counting on $[n]\times [n^\eps]$ grid, which was not previous known, and is of independent interest.

\section{Preliminaries} \label{sec:prelim}

In this paper we denote $[m]=\{1,\ldots,m\}$. All logarithms are in base $2$ unless explicitly stated otherwise.

Our lower bounds are proved in Yao's cell-probe model~\cite{yao81}. In the cell-probe model, the memory is seen as an array of cells, where each cell consists of $w$ bits each. The query time is measured as the number of cells read from memory, and all computations are free. This model is strictly stronger than the word RAM, since in the word RAM the operations allowed on words are restricted, while in the cell-probe model we only measure the number of cells accessed. The cell-probe model is widely used in proving data structure lower bounds, especially by reduction to communication complexity problems~\cite{miltersen1995data}. In this paper we prove our result by a reduction to the \textsc{Blocked-LSD} problem introduced by \patrascu~\cite{patrascu8unifying}.

An SLP (straight line program) is a collection of $n$ derivation rules, defining the symbols $g_1,\ldots,g_n$. Each rule is either of the form $g_i \rightarrow `\sigma'$, i.e. $g_i$ is a terminal, which takes the value of a character $\sigma$ from the underlying alphabet, or of the form $g_i \rightarrow g_j g_k$, where $j<i$ and $k<i$, i.e. $g_j$ and $g_k$ were already defined, and we define the nonterminal symbol $g_i$ to be their concatenation. The symbol $g_n$ is the \emph{start symbol}. To derive the string we start from $g_n$ and follow the derivation rules until we get a sequence of characters from the alphabet. The length of the derived string is at most $2^n$. W.l.o.g. we assume it is at least $n$. As the same in Bille et al.~\cite{bille-2010}, we also assume w.l.o.g. that the grammars are in fact SLPs and so on the righthand side of each grammar rule there are either exactly two variables or one terminal symbol. In this paper SLP, CFG and grammar all mean the same thing.

The grammar random access problem is the following problem.
\begin{definition}[Grammar Random Access Problem]
For a CFG $\mathrm G$ of size $n$ representing a binary string of length $\strlen$, the problem is to build a data structure to support the following query: given $1 \le i \le \strlen$, return the $i$-th character (bit) in the string.
\end{definition}

We study two other data-structured problems, which are closely related to their communication-complexity counterparts.
\begin{definition}[Set Disjointness, $SD_\setsize$]
For a set $\setb \subseteq [\setsize]$, the problem is to build a data structure to support the following query: given a set $\seta \subseteq [\setsize]$, answer whether $\seta \cap \setb = \emptyset$.
\end{definition}

Given a universe $[\blocksize \blockcount]$, a set $\seta$ is called \emph{blocked with cardinality $\blocksize$} if when we divide the universe $[\blocksize \blockcount]$ into $\blockcount$ equal-sized consecutive blocks, $\seta$ contains exactly one element from each of the blocks while $\setb$ could be arbitrary.

\begin{definition}[Blocked Lopsided Set Disjointness, $BLSD_{\blocksize,\blockcount}$]
For a set $\setb \subseteq [\blocksize \blockcount]$, the problem is to build a data structure to support the following query: given a blocked set $\seta \subseteq [\blocksize \blockcount]$ with cardinality $\blockcount$, answer whether $\seta \cap \setb = \emptyset$.
\end{definition}

For proving lower bound for near-linear space data structures, we also need a variant of the range counting problem.

\begin{definition}[Range Counting on $\[n\]^{1+\eps}$ Grid] \label{def:rangecounting}
The range counting problem is a static data structure problem. We need to preprocess a set of $n$ points on a $[n] \times [n^\eps]$ grid. A query $(x,y)$ asks to count the number of points in a dominance rectangle $[1,x]\times [1,y]$, \emph{modulo $2$}.
\end{definition}

When $\eps=1$, the above problem has been investigated under the name ``range counting'' in \patrascu~\cite{patrascu8unifying}. Note that the above problem is ``easier'' than the classical 2D range-counting problem, since it is a dominance problem, it is on the grid, and it is modulo 2. However, the (tight) lower bound that is known for the general problem in \patrascu~\cite{patrascu8unifying}, is proven for the problem we define. In this paper, we extend this result a little bit to give a lower bound on the universe of $[n] \times [n^\eps]$ for any constant $\eps$.

\section{Lower Bound for Grammar Random Access}

In this section we prove the main lower bound for grammar random access. In Section \ref{sec:reduction_from_lsd} we show the main reduction from SD and BLSD. In Section \ref{sec:sd_and_lsd_lowerbounds} we prove lower bounds for SD and BLSD, based on reductions to communication complexity (these are implicit in the work of \patrascu~\cite{patrascu8unifying}). Finally, in Section \ref{sec:tying_it_together} we tie these together to get our lower bounds.

\subsection{Reduction from SD and LSD} \label{sec:reduction_from_lsd}

In this section we show how to reduce the grammar access problem to SD or BLSD, by considering a particular type of grammar. The reductions tie the parameters $n$ and $\strlen$ to the parameters $\blocksize$ and $\blockcount$ of BLSD (or just to the parameter $\setsize$ of SD). In Section \ref{sec:tying_it_together} we show how to choose the relation between the various parameters in order to get our lower bounds. We remark that the particular multiplicative constants in the lemmas below will not matter, but we give them nonetheless, for concreteness.

These reductions might be confusing for the reader, but they are in fact almost entirely tautological. They just follow from the fact that the communication matrix of SD is a tensor product of the 2 by 2 communication matrices for the coordinates, i.e., it is just a $\blockcount$-fold tensor product of the matrix $\left(
\begin{array}{cc}
1 & 1 \\
1 & 0 \\
\end{array}
\right)$.
For BLSD, the communication matrix is the $\blockcount$-fold tensor product of the $(2^\blocksize) \times \blocksize$ communication matrix for each block (for example, for $\blocksize=3$ this matrix is
$\left(
  \begin{array}{cccccccc}
    1 & 0 & 1 & 0 & 1 & 0 & 1 & 0 \\
    1 & 1 & 0 & 0 & 1 & 1 & 0 & 0 \\
    1 & 1 & 1 & 1 & 0 & 0 & 0 & 0 \\
  \end{array}
\right)$). We do not formulate our arguments in the language of communication matrices and tensor products, since this would hide what is really going on. To aid the reader, we give an example after each of the two constructions.

\begin{lemma}[Reduction from $SD_{\setsize}$] \label{lem:reduction_from_sd}
For any set $\setb \subseteq [\setsize]$, there is a grammar $G_{\setb}$ of size $n=2\setsize+1$ deriving a binary string $s_{\setb}$ of length $\strlen=2^{\setsize}$ such that for any set $\seta \subseteq [\setsize]$, it holds that $s_{\setb}[\seta]=1$ iff $\seta \cap \setb = \emptyset$.
\end{lemma}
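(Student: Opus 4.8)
The plan is to build the grammar so that the derived string $s_{\setb}$ is exactly the ``truth table'' of the predicate $\seta \mapsto [\seta\cap\setb=\emptyset]$, where a subset $\seta\subseteq[\setsize]$ is identified with the integer in $\{0,1,\dots,2^{\setsize}-1\}$ whose binary representation has a $1$ in position $j$ iff $j\in\seta$. Concretely, I would define two chains of nonterminals of length $\setsize$ each. The first chain, $z_1,\dots,z_{\setsize}$, derives the all-zero strings of lengths $2^1,2^2,\dots,2^{\setsize}$, via $z_1\to {}`0'`0'$ (after a terminal $g\to{}`0'$) and $z_{k}\to z_{k-1}z_{k-1}$. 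The second chain, $h_0,h_1,\dots,h_{\setsize}$, is the ``indicator'' chain: $h_0\to{}`1'$, and for each $k=1,\dots,\setsize$ we set
\[
h_k \;\to\;
\begin{cases}
h_{k-1}\,z_{k-1}, & \text{if } k\in\setb,\\[2pt]
h_{k-1}\,h_{k-1}, & \text{if } k\notin\setb,
\end{cases}
\]
with the start symbol $g_n = h_{\setsize}$. Counting symbols: one terminal $`0'$, one terminal $`1'$, the $\setsize$ symbols $z_1,\dots,z_{\setsize}$ (we can drop $z_0$ by reusing the terminal $`0'$ appropriately), and the $\setsize$ symbols $h_1,\dots,h_{\setsize}$, giving size $n=2\setsize+\Theta(1)$; a small amount of care with the base cases makes this exactly $2\setsize+1$. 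The derived length is clearly $2^{\setsize}$, since $|h_k|=2|h_{k-1}|$ in both cases and $|h_0|=1$.

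The heart of the argument is the correctness claim $s_{\setb}[\seta]=1 \iff \seta\cap\setb=\emptyset$, which I would prove by induction on $k$, establishing the invariant that the string derived by $h_k$ (of length $2^k$) has a $1$ in position $\seta'$ (for $\seta'\subseteq\{1,\dots,k\}$, read as a $k$-bit index) precisely when $\seta'\cap(\setb\cap\{1,\dots,k\})=\emptyset$. The base case $k=0$ holds since $h_0$ derives the single bit $1$ and the empty intersection condition is vacuously true. For the inductive step: if $k\notin\setb$, then $h_k$ derives two copies of the string for $h_{k-1}$, so the bit at the length-$2^k$ index with high bit $b\in\{0,1\}$ and low $(k-1)$ bits $\seta''$ equals the $h_{k-1}$-bit at $\seta''$, which by induction is $1$ iff $\seta''$ is disjoint from $\setb\cap\{1,\dots,k-1\}=\setb\cap\{1,\dots,k\}$ — independent of $b$, exactly as required since $k\notin\setb$. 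If $k\in\setb$, then $h_k = h_{k-1}z_{k-1}$: the low-half indices ($b=0$, i.e.\ $k\notin\seta'$) get the $h_{k-1}$ bits, and the high-half indices ($b=1$, i.e.\ $k\in\seta'$) are all $0$; so $s[\seta']=1$ iff $k\notin\seta'$ and $\seta''$ is disjoint from $\setb\cap\{1,\dots,k-1\}$, which is exactly $\seta'\cap(\setb\cap\{1,\dots,k\})=\emptyset$. Taking $k=\setsize$ gives the lemma.

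The main thing to be careful about — rather than a genuine obstacle — is nailing the bit-ordering convention (which coordinate of $\seta$ corresponds to the most significant bit of the index, and hence to the outermost concatenation) so that it is used consistently in the statement ``$s_{\setb}[\seta]$'' and in the induction; getting this backwards would not break the result but would make the write-up wrong. A secondary bookkeeping point is trimming the construction to hit the exact size bound $n=2\setsize+1$ by sharing the terminal symbols across the two chains and handling $z_0$/$h_0$ without a dedicated extra rule. I expect to conclude with a worked example for small $\setsize$ (say $\setsize=3$), matching the $3$-fold tensor product of $\bigl(\begin{smallmatrix}1&1\\1&0\end{smallmatrix}\bigr)$ alluded to before the lemma, which both sanity-checks the ordering and gives the reader the promised illustration.
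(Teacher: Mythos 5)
Your proposal is correct and takes essentially the same approach as the paper's proof: the same two chains of nonterminals (a zero chain deriving $0^{2^0},0^{2^1},\ldots$ and an indicator chain that either doubles or appends a zero block depending on whether $k\in\setb$), the same indexing convention with the $k$-th coordinate as the high-order bit, and the same induction on $k$ with the invariant that the $k$-th symbol is the $SD_k$ instance for $\setb\cap\{1,\ldots,k\}$. The only difference is bookkeeping: the paper reaches $n=2\setsize+1$ exactly by making the terminal `0' serve as the length-$2^0$ zero symbol, which is precisely the trimming you flag; note your zero chain as written runs one index too high (you never use $z_\setsize$, since $h_\setsize$ only needs $z_{\setsize-1}$).
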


Note that in this lemma we have indexed the string $s$ by \emph{sets}: there are $2^{\setsize}$ possible sets $\seta$, and the length of the string $s_{\setb}$ is also $2^{\setsize}$ -- each set $\seta$ serves as an index of a unique character. The indexing is done in lexicographic order: the set $\seta$ is identified with its \emph{characteristic vector}, i.e., the vector in $\{0,1\}^{\setsize}$ whose $i$-th coordinate is `1' if $i \in \seta$, and `0' otherwise, and the sets are ordered according to lexicographic order of their characteristic vectors. For example, here is the ordering for the case $\setsize=3$: $\emptyset, \{1\},\{2\},\{1,2\}, \{3\},\{1,3\},\{2,3\},\{1,2,3\}$.

\begin{proof}
We now show how to build the grammar $G_{\setb}$. The grammar has $\setsize$ symbols for the strings $0,0^2,0^4,\ldots,0^{2^{\setsize-1}}$, i.e., all strings consisting solely of the character `0', of lengths which are all powers of 2 up to $2^{\setsize-1}$. Then, the grammar has $\setsize+1$ additional symbols $g_0,g_1,\ldots,g_\setsize$. The terminal $g_0$ is equal to the character $1$. For any $1 \le i \le \setsize$, we set $g_i$ to be equal to $g_{i-1}g_{i-1}$ if $i \notin \setb$, and to be equal to $g_{i-1}0^{2^{i-1}}$ if $i \in \setb$. The start symbol of the grammar is $g_{\setsize}$.

We claim that the string derived by this grammar has the property that $s_{\setb}[\seta]=1$ iff $\seta \cap \setb = \emptyset$. This is easy to prove by induction on $i$, where the induction claim is that for any $i$, $g_i$ is the string that corresponds to the set $\setb \cap \{1,\ldots,i\}$ over the universe $\{1,\ldots,i\}$.
\end{proof}

\begin{example} \label{exm:sdexample}
Consider the universe $\setsize=4$. Let $\setb=\{1,3\}$. The string $s_{\setb}$ is $1010000010100000$. The locations of the 1`s correspond exactly to the sets that don't intersect $\setb$, namely to the sets $\emptyset$, $\{2\}$, $\{4\}$ and $\{2,4\}$, respectively.
\end{example}

We now show the reduction from blocked LSD. It follows along the same general idea, but the grammar is slightly more complicated.

\begin{lemma}[Reduction from $BLSD_{\blocksize,\blockcount}$] \label{lem:reduction_from_blsd}
For any set $\setb \subseteq [\blocksize \blockcount]$, there is a grammar $G_{\setb}$ of size $n=2\blocksize \blockcount+1$ deriving a binary string $s_{\setb}$ of length $\strlen=\blocksize^\blockcount$ such that for any blocked set $\seta \subseteq [\blocksize \blockcount]$ of cardinality $m$, it holds that $s_{\setb}[\seta]=1$ iff $\seta \cap \setb = \emptyset$.
\end{lemma}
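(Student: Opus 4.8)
The plan is to generalize the construction from Lemma~\ref{lem:reduction_from_sd} by replacing each single "coordinate" of the $SD$ construction with a "block" of $\blocksize$ coordinates. Recall that in the $SD$ reduction, each index $i$ contributes a factor of $2$ to the string length, and the derivation rule for $g_i$ either doubles the previous string ($i \notin \setb$) or appends a block of zeros ($i \in \setb$). For $BLSD$, I would instead process the universe $[\blocksize\blockcount]$ block by block. After processing the first $t$ blocks, the invariant will be that the current nonterminal, call it $h_t$, is the string of length $\blocksize^t$ whose $\seta'$-th character (for $\seta'$ a blocked set of cardinality $t$ over the first $t$ blocks) is $1$ iff $\seta' \cap \setb = \emptyset$, where the indexing orders blocked sets by listing, for each block, which of the $\blocksize$ positions was chosen.

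First I would set up the "all zeros" gadgets: symbols for $0^1, 0^{\blocksize}, 0^{\blocksize^2}, \ldots, 0^{\blocksize^{\blockcount-1}}$, which costs $O(\blockcount \log \blocksize) \le O(\blocksize\blockcount)$ symbols (and in fact one can be more careful to fit the stated bound $n = 2\blocksize\blockcount+1$). Then, to go from $h_{t-1}$ to $h_t$, I need to form the concatenation $w_1 w_2 \cdots w_\blocksize$ where $w_j = h_{t-1}$ if the $j$-th element of block $t$ is not in $\setb$, and $w_j = 0^{\blocksize^{t-1}}$ (all zeros of the appropriate length) otherwise. This concatenation of $\blocksize$ pieces is built with $\blocksize-1$ binary concatenation rules per block, giving roughly $\blocksize\blockcount$ additional symbols; combined with the zero-gadgets this yields the claimed size bound $n = 2\blocksize\blockcount+1$, up to the exact constant which I would verify by a careful accounting (the lemma statement tells us the constants do not matter anyway). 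The start symbol is $h_\blockcount$, and the derived string has length $\blocksize^\blockcount = \strlen$ as required.

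The correctness claim $s_{\setb}[\seta] = 1 \iff \seta \cap \setb = \emptyset$ follows by induction on $t$ exactly as in Lemma~\ref{lem:reduction_from_sd}: a blocked set $\seta$ of cardinality $\blockcount$ picks one position $j_t$ from each block $t$; its index into $h_\blockcount$ decomposes as choosing the $j_\blockcount$-th of $\blocksize$ copies at the top level, then recursing into $h_{\blockcount-1}$. If $j_t \in \setb$ for some $t$, then at that level the chosen piece is all zeros, so the character is $0$; conversely if $j_t \notin \setb$ for all $t$, the recursion bottoms out at $g_0 = 1$ (here I would initialize $h_0 = g_0 = `1\textrm'$, matching the base case where the empty blocked set over zero blocks trivially does not intersect $\setb$). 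This is precisely the statement that $\seta \cap \setb = \emptyset$ iff the character is $1$.

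The only mild obstacle is bookkeeping: making sure the grammar size is genuinely $2\blocksize\blockcount + 1$ (or at least $O(\blocksize\blockcount)$, which is all that matters downstream), since naively building the zero-strings via repeated doubling costs $\lceil \log(\blocksize^{\blockcount})\rceil = \blockcount \lceil \log \blocksize \rceil$ symbols rather than $\blocksize\blockcount$; but one can instead note that the zero-string $0^{\blocksize^{t}}$ of length $\blocksize^t$ is itself obtainable as a byproduct of the block-$t$ gadget when $\setb$ happens to contain a whole block, or simply build them directly with $\blocksize - 1$ rules per level reusing the previous level, which stays within the $O(\blocksize\blockcount)$ budget. I would present the example (analogous to Example~\ref{exm:sdexample}) for small $\blocksize$ and $\blockcount$ — say $\blocksize = 3$, $\blockcount = 2$ — to make the $(2^\blocksize)\times\blocksize$ tensor-product structure concrete, mirroring the $3\times 8$ matrix displayed before the lemma, and then the proof is complete.
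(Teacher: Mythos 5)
Your construction is the same as the paper's: a family of zero-string gadgets $0^{\blocksize^{t}}$, a chain of nonterminals $g_0,\ldots,g_\blockcount$ with $g_0 = `1'$, where $g_t$ is the concatenation of $\blocksize$ pieces each equal to $g_{t-1}$ or $0^{\blocksize^{t-1}}$ according to whether the corresponding element of block $t$ lies in $\setb$, with correctness proved by the same block-by-block induction. The paper likewise treats the exact constant in the size bound as inessential, so the proposal matches the paper's argument essentially step for step.
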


Recall that by a ``blocked set $\seta \subseteq [\blocksize \blockcount]$ of cardinality $\blockcount$'' we mean a set such that the universe $[\blocksize \blockcount]$ is divided into $\blockcount$ equal-sized blocks, and $\seta$ contains exactly one element from each of these blocks.

Note that in this lemma we have again indexed the string $s$ by \emph{sets}: there are $\blocksize^\blockcount$ possible sets $\seta$ and the length of the string is $\blocksize^\blockcount$. The indexing is done in \emph{lexicographic order}, this time identifying a set $\seta$ with a length-$\blockcount$ vector whose $i$-th coordinate is chosen according to which element it contains in block $i$, and the sets are ordered according to lexicographic order of their characteristic vectors. For example, here is the ordering for the case $\blockcount=2,\blocksize=3$: $\{1,4\},\{2,4\},\{3,4\},\{1,5\},\{2,5\},\{3,5\},\{1,6\},\{2,6\},\{3,6\}$.

The construction in this reduction is similar to that in the case of $SD$, but instead of working element by element, we work block by block.

\begin{proof}
We now show how to build the grammar $G_{\setb}$. The grammar has $\blockcount$ symbols for the strings $0,0^\blocksize,0^{\blocksize^2},0^{\blocksize^3},\ldots,0^{\blocksize^{\blockcount-1}}$, i.e., all strings consisting solely of the character `0', of lengths which are all powers of $\blocksize$ up to $\blocksize^{\blockcount-1}$. We cannot simply obtain the symbols directly from each other: e.g., to obtain \ $0^{\blocksize^2}$ from $0^\blocksize$, we need to concatenate $0^\blocksize$ with itself $\blocksize$ times. Thus we use $\blocksize \blockcount$ rules to derive all of these symbols. (In fact, $O(\blockcount \log \blocksize)$ rules can suffice but this does not matter).

Then, beyond these, the grammar has $\blockcount+1$ additional symbols $g_0,g_1,\ldots,g_\blockcount$, one for each block. The terminal $g_0$ is equal to the character $1$. For any $1 \le 1 \le \blockcount$, $g_i$ is constructed from $g_{i-1}$ according to which elements of the $i$-th block are in $\setb$: we set $g_i$ to be a concatenation of $\blocksize$ symbols, each of which is either $g_{i-1}$ or $0^{\blocksize^{i-1}}$. In particular, $g_i$ is the concatenation of $g_i^{(1)},\ldots,g_i^{(\blocksize)}$, where $g_i^j$ is equal to $g_{i-1}$ if the $j$-th element of the $i$-th block is not in $\setb$, and it is equal to $0^{\blocksize^{i-1}}$ if the $j$-th element of the $i$-th block is in $\setb$. To construct these symbols we need at most $\blocksize \blockcount$ rules, because we need $\blocksize-1$ concatenation operations to derive $g_i$ from $g_{i-1}$. (Note that here we cannot get down to $O(\blockcount \log \blocksize)$ rules -- $\Theta(\blocksize \blockcount)$ seem to be necessary.) The start symbol of the grammar is $g_\blockcount$.

We claim that the string produced by this grammar has the property that $s_{\setb}[\seta]=1$ iff $\seta \cap \setb = \emptyset$. This is easy to prove by induction on $i$, where the induction claim is that for any $i$, $g_i$ is the string that corresponds to the set $\seta \cap \{1\,\ldots,i \blocksize\}$ over the universe $\{1,\ldots,i \blocksize \}$.
\end{proof}

\begin{example} \label{exm:blsdexample}
Consider the values $\blocksize=3$ and $\blockcount=3$. Let $\setb=\{1,3,5,9\}$. The string $s_{\setb}$ is $010000010010000010000000000$. The locations of the 1's correspond exactly to the blocked sets that don't intersect $\setb$, namely to the sets $\{2,4,7\}$, $\{2,6,7\}$, $\{2,4,8\}$ and $\{2,6,8\}$, respectively. A brief illustration for this example is in Figure~\ref{fig:blsdexample}.
\end{example}
\begin{figure}[!ht]
  \centering
  \includegraphics[width=8cm]{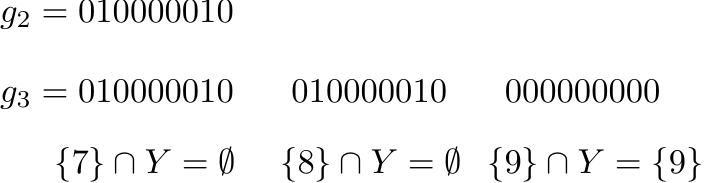}
  \caption{An illustration of Example~\ref{exm:blsdexample}.}
 \label{fig:blsdexample}
\end{figure}
\subsection{Lower bounds for SD and BLSD} \label{sec:sd_and_lsd_lowerbounds}
In this subsection we show lower bounds for SD and BLSD that are implicit in the work of \patrascu~\cite{patrascu8unifying}. Recall the notations from Section \ref{sec:prelim}: in particular, in all of the bounds, $w$, $\dspace$, and $t$ denote the word size (measured in bits), the size of the data structure (measured in words) and the query time (measured in number of accesses to words), respectively.

\begin{theorem} \label{thm:sd_ds_lb}
For any 2-sided-error data structure for $SD_\setsize$, $t \ge \Omega(\setsize / ( w + \log S))$.
\end{theorem}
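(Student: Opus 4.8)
The plan is to prove this by the textbook reduction from data structures to two-party communication complexity, using the classical randomized lower bound for set disjointness. Suppose we are given a (possibly randomized) data structure for $SD_\setsize$ that uses $S$ words of $w$ bits each, answers each query with at most $t$ cell probes, and has 2-sided error bounded away from $1/2$. I would turn it into a randomized communication protocol for the two-party disjointness problem on universe $[\setsize]$: Alice holds the query set $\seta$, Bob holds the ``database'' set $\setb$, and they must decide whether $\seta \cap \setb = \emptyset$.

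The simulation is the usual one. First, publish the data structure's internal random string as public randomness (public coins are free in the communication model), so that Alice and Bob share it. Bob then runs the preprocessing algorithm on $\setb$ locally, with \emph{no} communication, to obtain the contents of all $S$ memory cells. Alice runs the query algorithm for $\seta$: each time the algorithm wants to read a cell, Alice sends its address to Bob, which costs $\ceil{\log S}$ bits, and Bob replies with the $w$-bit contents of that cell. After at most $t$ such rounds Alice has observed everything the query algorithm reads, so she computes and announces the answer. The total communication is at most $t\,(\ceil{\log S} + w) \le t\,(\log S + 1 + w)$ bits, and since the public coins are exactly the data structure's randomness, the protocol's error equals the data structure's error, so it is a bounded-error protocol for \disj{} on $[\setsize]$.

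Now I would invoke the classical randomized communication lower bound for set disjointness (Kalyanasundaram--Schnitger, Razborov): every bounded-error protocol for \disj{} on a universe of size $\setsize$ must communicate $\Omega(\setsize)$ bits. Combining this with the bound on the communication of the simulated protocol yields $t\,(w + \log S) = \Omega(\setsize)$, i.e.\ $t = \Omega\bigl(\setsize/(w + \log S)\bigr)$, which is exactly the claim.

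There is essentially no difficult step here; the result is, as noted, implicit in \patrascu~\cite{patrascu8unifying}. The only points that deserve a line of care are (i) that preprocessing is genuinely for free, which holds because Bob can carry it out with no communication, and (ii) the handling of randomness, which is clean once one moves to public coins so that the protocol inherits the data structure's error guarantee verbatim. I expect the write-up to be short.
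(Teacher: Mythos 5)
Your proposal is correct and is exactly the paper's argument: simulate the $t$-probe query as a $t$-round protocol in which Alice sends $O(\log S)$-bit cell addresses and Bob returns $w$-bit cell contents, then invoke the $\Omega(\setsize)$ randomized communication lower bound for \disj{} (Lemma~\ref{thm:sd_comm_lb}) to conclude $t(\log S + w) = \Omega(\setsize)$. Your explicit remarks about treating the data structure's randomness as public coins and about preprocessing being free are implicit in the paper but are a fine addition.
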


% WEI: comments
Note that this theorem does not give strong bounds when $w = O(\log \strlen)$, but it is meaningful for bit-probe ($w=1$) bound and a warm-up for the reader.

\begin{theorem} \label{thm:blsd_ds_lb}
Let $\eps>0$ be any small constant. For any 2-sided-error data structure for $BLSD_{\blocksize,\blockcount}$,
\begin{equation} \label{eq:blsd_ds_lb}
t \ge \Omega \left( \min \left( \frac{\blockcount \log \blocksize}{\log \dspace}, \frac{\blocksize^{1-\eps}\blockcount}{w} \right) \right) \ .
\end{equation}
\end{theorem}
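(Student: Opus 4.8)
The plan is to reduce the communication-complexity version of Blocked-LSD to the data-structure problem $BLSD_{\blocksize,\blockcount}$, using the standard cell-probe-to-communication simulation, and then invoke \patrascu's communication lower bound for Blocked-LSD. Recall the setup: Alice holds the blocked set $\seta$ (of cardinality $\blockcount$, one element per block), which is the query; Bob holds the set $\setb \subseteq [\blocksize\blockcount]$, which is preprocessed into the data structure of $\dspace$ words of $w$ bits. In the induced communication protocol, a $t$-probe data structure yields a protocol in which, for each probed cell, Alice sends its address ($\log\dspace$ bits, rounded up) and Bob replies with its contents ($w$ bits). Hence $t$ probes cost at most $t(\lceil\log\dspace\rceil + w)$ bits of communication, and the protocol computes $[\seta\cap\setb=\emptyset]$. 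A 2-sided-error data structure yields a 2-sided-error protocol with the same error, so it suffices to lower bound the randomized communication complexity of Blocked-LSD in this asymmetric message pattern.

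The key ingredient is \patrascu's lower bound for Blocked-LSD (equivalently, this is the ``hard instance'' behind his Lopsided Set Disjointness results): any randomized protocol with bounded error for Blocked-LSD on $\blockcount$ blocks of size $\blocksize$, in which Alice sends a total of $a$ bits and Bob sends a total of $b$ bits, must have either $a = \Omega(\blockcount\log\blocksize)$ or $b = \Omega(\blocksize^{1-\eps}\blockcount)$ (for any constant $\eps>0$; the $\blocksize^{1-\eps}$ rather than $\blocksize$ is exactly \patrascu's loss). Intuitively, to learn which of the $\blocksize^\blockcount$ possible queries Alice holds, Bob effectively needs Alice to ``spell out'' her set, costing $\log(\blocksize^\blockcount) = \blockcount\log\blocksize$ bits; alternatively, Bob can describe a large portion of $\setb$ back to Alice, but by a direct-sum / richness-type argument across the $\blockcount$ blocks this requires $\Omega(\blocksize^{1-\eps})$ bits per block.

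Plugging the simulation into this bound: Alice's total communication is $a \le t\lceil\log\dspace\rceil$ and Bob's is $b \le tw$. So either $t\lceil\log\dspace\rceil = \Omega(\blockcount\log\blocksize)$, giving $t = \Omega\!\left(\frac{\blockcount\log\blocksize}{\log\dspace}\right)$, or $tw = \Omega(\blocksize^{1-\eps}\blockcount)$, giving $t = \Omega\!\left(\frac{\blocksize^{1-\eps}\blockcount}{w}\right)$; in either case $t$ is at least the minimum of the two, which is exactly \eqref{eq:blsd_ds_lb}. The main obstacle — and the reason this is stated as ``implicit in \patrascu'' rather than reproved — is establishing the underlying communication lower bound for Blocked-LSD itself with the asymmetric $a$-versus-$b$ accounting and with 2-sided error; this is the technical heart of \patrascu~\cite{patrascu8unifying}, and here we only need to quote it and combine it with the (routine) cell-probe simulation. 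A minor care-point is that the simulation charges $\lceil\log\dspace\rceil$ rather than $\log\dspace$ per address, but since we may assume $\dspace \ge 2$ this only affects constants, which the statement absorbs.
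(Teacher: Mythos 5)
Your proof is correct and follows essentially the same route as the paper: the standard cell-probe-to-communication simulation (Alice sends $t\log\dspace$ address bits, Bob sends $tw$ content bits) combined with \patrascu's asymmetric communication lower bound for Blocked-LSD (Lemma~3.1 of~\cite{patrascu8unifying}), which is exactly what the paper quotes as Lemma~\ref{thm:blsd_comm_lb}. The final case analysis and the remark about absorbing the $\lceil\log\dspace\rceil$ rounding into constants are both sound.
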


The proofs follow by standard reductions from data structure to communication complexity, using known lower bounds for SD and BLSD (the latter is one of the main results in \cite{patrascu8unifying}).

We now cite the corresponding communication complexity lower bounds:

\begin{lemma}[See~\cite{bar2004information,razborov1992distributional,babai1986complexity}] \label{thm:sd_comm_lb}
Consider the communication problem where Alice and Bob each receive a subset of $m$, and they want to decide whether the sets are disjoint. Any randomized 2-sided-error protocol for this problem uses communication $\Omega(\setsize)$.
\end{lemma}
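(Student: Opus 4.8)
The plan is to prove this by the information-complexity / direct-sum method of Bar-Yossef, Jayram, Kumar and Sivakumar~\cite{bar2004information}. Identify a subset of $[m]$ with its characteristic vector in $\{0,1\}^m$, so that disjointness is $\mathrm{DISJ}_m(x,y)=\neg\bigvee_{i\in[m]}(x_i\wedge y_i)$. First I would fix a ``collapsing'' distribution $\mu$ on $\{0,1\}\times\{0,1\}$ supported only on non-intersecting pairs: let $D$ be uniform in $\{A,B\}$; conditioned on $D=A$ put $(x,y)=(0,b)$ with $b$ uniform, and conditioned on $D=B$ put $(x,y)=(a,0)$ with $a$ uniform. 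The product $\mu^{m}$ is then supported entirely on disjoint pairs; moreover, for each coordinate $i$ one can ``plant'' an arbitrary bit-pair in coordinate $i$ and sample the remaining coordinates from $\mu$, and the resulting instance is disjoint if and only if the planted pair is not $(1,1)$. This planting is exactly what turns a protocol for $\mathrm{DISJ}_m$ into one for the two-bit $\mathrm{AND}$ on a single coordinate.

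Next, for a protocol $\Pi$ solving $\mathrm{DISJ}_m$ with two-sided error at most $1/3$ on every input, define its conditional information cost $\mathrm{IC}_{\mu^m}(\Pi)=I\big(X,Y;\Pi(X,Y)\mid D\big)$, where $D=(D_1,\dots,D_m)$. By superadditivity of conditional mutual information over the independent coordinates (the direct-sum lemma), $\mathrm{IC}_{\mu^m}(\Pi)\ge\sum_{i=1}^m I(X_i,Y_i;\Pi\mid D_i)$, and by the planting argument each summand equals the single-coordinate conditional information cost of some protocol that computes $\mathrm{AND}$ with error $\le 1/3$ on all four inputs. So it suffices to show that every such single-coordinate protocol $\tau$ has $I(X,Y;\tau\mid D)=\Omega(1)$ under $\mu$; summing over $i$ then gives $\mathrm{IC}_{\mu^m}(\Pi)=\Omega(m)$, and since the transcript of a $c$-bit protocol carries at most $c$ bits of information about the inputs, $\Pi$ uses communication $\Omega(m)$. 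As this holds for every correct two-sided-error protocol, $R_{1/3}(\mathrm{DISJ}_m)=\Omega(m)$, which is the claim.

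The technical heart, and the step I expect to be the main obstacle, is the single-coordinate bound $I(X,Y;\tau\mid D)=\Omega(1)$. I would argue it with the Hellinger-distance toolkit for transcript distributions: writing $P_{xy}$ for the transcript distribution of $\tau$ on input $(x,y)$, private-coin protocols satisfy the ``cut-and-paste'' identity $h(P_{00},P_{11})=h(P_{01},P_{10})$ together with a Pythagorean-type inequality bounding $h(P_{00},P_{01})$ and $h(P_{00},P_{10})$ in terms of $h(P_{01},P_{10})$. Correctness forces $h(P_{00},P_{11})=\Omega(1)$, since $\tau$ must distinguish the answer on $(1,1)$ from the answer on $(0,0)$ and the Hellinger distance dominates (a constant times) the total variation of the output distributions; hence $h(P_{01},P_{10})=\Omega(1)$. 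Finally $I(X,Y;\tau\mid D)$ is, up to constants, a sum of Jensen--Shannon divergences of $P_{00}$ with $P_{01}$ and with $P_{10}$, which by cut-and-paste and the triangle inequality is $\Omega(h^2(P_{01},P_{10}))=\Omega(1)$. As an alternative one could instead follow Razborov's corruption argument~\cite{razborov1992distributional}: take the hard distribution putting most of its mass on uniformly random disjoint pairs with $|X|=|Y|\approx m/4$ and a constant fraction on pairs meeting in exactly one element, show that every combinatorial rectangle which is almost entirely disjoint has $\mu$-measure $2^{-\Omega(m)}$, and deduce that a bounded-error protocol of cost $c$ --- which partitions the input space into $2^{c}$ near-monochromatic rectangles --- must have $c=\Omega(m)$. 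I would carry out the information-complexity route, as it is more modular.
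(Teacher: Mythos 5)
The paper does not actually prove this lemma; it is stated as a known result with citations to Bar-Yossef--Jayram--Kumar--Sivakumar, Razborov, and Babai--Frankl--Simon, and the paper then only uses it as a black box in the reduction to data structures. Your sketch is a correct reconstruction of the first of those cited proofs: the collapsing product distribution with the auxiliary variable $D$, the direct-sum decomposition of conditional information cost into single-coordinate $\mathrm{AND}$ instances, and the Hellinger cut-and-paste argument forcing each coordinate to carry $\Omega(1)$ information are all laid out correctly, and you also correctly identify Razborov's corruption argument as the alternative among the cited sources. The only details you elide are both standard: cut-and-paste holds for private-coin protocols (handled by conditioning on public randomness or Newman's theorem, at negligible cost against an $\Omega(m)$ target), and the precise Hellinger/total-variation and weak-triangle constants. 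So the proposal is correct and matches the route the paper points to, even though the paper itself supplies no proof.
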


\begin{lemma}[See~\cite{patrascu8unifying}, Lemma 3.1] \label{thm:blsd_comm_lb}
Let $\eps>0$ be any small constant. Consider the communication problem where Bob gets a subset of $[\blocksize \blockcount]$ and Alice gets a blocked subset of $[\blocksize\blockcount]$ of cardinality $\blockcount$, and they want to decide whether the sets are disjoint. In any randomized 2-sided-error protocol for this problem, either Alice sends $\Omega(\blockcount \log \blocksize)$ bits or Bob sends $\blocksize^{1-\eps} \blockcount$ bits. (The $\Omega$-notation hides a multiplicative constant that depends on $\eps$).
\end{lemma}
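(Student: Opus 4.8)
We sketch a proof; this is essentially Lemma~3.1 of~\cite{patrascu8unifying}, repackaged through the richness lemma of Miltersen, Nisan, Safra and Wigderson~\cite{miltersen1995data}. The plan is to fix a hard distribution, apply the (randomized) richness lemma to obtain a large almost-monochromatic rectangle, and then derive a contradiction by an entropy count over the $\blockcount$ blocks. Let $a$ and $b$ denote the numbers of bits sent by Alice and by Bob in a fixed protocol whose error probability is a sufficiently small constant (obtained from any $O(1)$-error protocol by $O(1)$-fold repetition, which changes $a$ and $b$ only by constant factors); by averaging the public randomness we may assume it is deterministic and errs with this small constant probability under $\mu=\mu_A\times\mu_B$, where $\mu_A$ is uniform over blocked sets of cardinality $\blockcount$ and $\mu_B$ puts each element of $[\blocksize\blockcount]$ into $\setb$ independently with probability $1/2$. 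We want to show that if $a$ is much smaller than $\blockcount\log\blocksize$ then $b\ge\blockcount\blocksize^{1-\eps}$.

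First I would note that the $1$-entries of the communication matrix (the disjoint pairs) are abundant: any $\setb$ having at most $3\blocksize/4$ elements in each of the $\blockcount$ blocks is disjoint from at least $(\blocksize/4)^{\blockcount}$ of the blocked sets $\seta$, and a Chernoff bound shows that at least half of all $\setb$ are of this form provided $\blocksize=\Omega(\log\blockcount)$ (for smaller $\blocksize$, where the claimed bound is weaker, one falls back on the lower bound for ordinary set disjointness on $[\blocksize\blockcount]$). Feeding this into the randomized richness lemma yields a combinatorial rectangle $R=\mathcal X\times\mathcal Y$ with $|\mathcal X|\ge\blocksize^{\blockcount}2^{-O(\blockcount)-O(a)}$ and $|\mathcal Y|\ge 2^{\blocksize\blockcount-O(a+b)}$ on which, under the uniform distribution on $R$, the fraction of intersecting pairs is at most a small absolute constant $\delta$.

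Next I would pass to the block structure. Identify $\seta\in\mathcal X$ with its tuple $(a_1,\dots,a_\blockcount)\in[\blocksize]^\blockcount$ of per-block choices and $\setb\in\mathcal Y$ with the tuple $(\setb_1,\dots,\setb_\blockcount)$ of its per-block restrictions, and let $\rho_i,\sigma_i$ be the marginal laws of $a_i,\setb_i$ when $(\seta,\setb)$ is uniform on $R$. Two elementary facts do the work. By subadditivity of entropy, $\log|\mathcal X|\le\sum_iH(\rho_i)$ and $\log|\mathcal Y|\le\sum_iH(\sigma_i)$, hence $\sum_i\big(\log\blocksize-H(\rho_i)\big)\le O(\blockcount)+O(a)$ and $\sum_i\big(\blocksize-H(\sigma_i)\big)\le O(a+b)$, every summand being nonnegative. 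And since $R$ is a rectangle, $a_i$ and $\setb_i$ are independent under the uniform law on $R$, so the bound on the density of intersecting pairs forces $\Pr[a_i\in\setb_i]\le\delta$ for \emph{every} block $i$. The technical heart is then a one-block information inequality: if $\Pr_{a_i\sim\rho_i,\ \setb_i\sim\sigma_i}[a_i\in\setb_i]\le\delta$ for a small enough constant $\delta$, then either $\log\blocksize-H(\rho_i)\ge\eps\log\blocksize$ or $\blocksize-H(\sigma_i)\ge c_\eps\,\blocksize^{1-\eps}$. This is where $\eps$ enters: $H(\sigma_i)$ within $c_\eps\blocksize^{1-\eps}$ of $\blocksize$ forces all but $O_\eps(\blocksize^{1-\eps})$ of the $\blocksize$ coordinates of $\setb_i$ to be near-balanced (again by entropy subadditivity inside the block, using that $1-H_2(x)$ is bounded below for $x$ bounded away from $1/2$), and then $H(\rho_i)$ within $\eps\log\blocksize$ of $\log\blocksize$ forces $\rho_i$ to place constant mass on that near-balanced set of coordinates, making $\Pr[a_i\in\setb_i]$ a positive constant --- a contradiction.

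Finally I would combine the counts. Assume $a\le c\,\eps\,\blockcount\log\blocksize$ for a suitably small constant $c$, and assume $\blocksize$ exceeds a constant depending on $\eps$ (so that $\log\blocksize$ dominates the $O(1)$ terms from the previous estimates). Then $\sum_i(\log\blocksize-H(\rho_i))\le O(\blockcount)+O(a)$ leaves at least $\blockcount/2$ blocks with $\log\blocksize-H(\rho_i)<\eps\log\blocksize$; for each of them the one-block inequality gives $\blocksize-H(\sigma_i)\ge c_\eps\blocksize^{1-\eps}$, so $\sum_i(\blocksize-H(\sigma_i))\ge\tfrac12\blockcount\,c_\eps\,\blocksize^{1-\eps}$. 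Comparing with the upper bound $O(a+b)$ and absorbing $a$ (which is $o(\blockcount\blocksize^{1-\eps})$) yields $b\ge\Omega_\eps(\blockcount\blocksize^{1-\eps})$; re-running the argument with $\eps/2$ in place of $\eps$ then turns the hidden constant into a clean $\blocksize^{1-\eps}\blockcount$ once $\blocksize$ is above a constant depending on $\eps$. The only non-bookkeeping step is the one-block information inequality with the right $\blocksize^{1-\eps}$ rate, which is therefore the step I expect to be the main obstacle; the rest is the routine richness-plus-entropy packaging.
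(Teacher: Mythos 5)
The paper does not prove this lemma at all: it quotes it verbatim from \patrascu\ (Lemma~3.1 of~\cite{patrascu8unifying}), and the text around it says only ``We now cite the corresponding communication complexity lower bounds.'' So there is no in-paper argument to compare against, and the relevant question is whether your sketch is a correct route to the cited result.

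Your route---boost to a tiny constant error, fix the product distribution with $\mu_A$ uniform on blocked sets and $\mu_B$ coordinate-wise uniform, invoke the randomized richness lemma to extract a rectangle $R=\mathcal X\times\mathcal Y$ with $\log|\mathcal X|\ge\blockcount\log\blocksize-O(\blockcount)-O(a)$ and $\log|\mathcal Y|\ge\blocksize\blockcount-O(a+b)$ and $\Pr_R[\seta\cap\setb\ne\emptyset]\le\delta$, then run a per-block entropy direct sum---is exactly the corruption/richness packaging that underlies \patrascu's lower bound, and it is sound. The one step you rightly single out as the crux, the one-block information inequality, does in fact go through in the form you state, with a short argument: if $\blocksize-H(\sigma_i)<c\,\blocksize^{1-\eps}$ then by subadditivity the set $T_i$ of coordinates $j\in[\blocksize]$ with $\bigl|\Pr[j\in\setb_i]-\tfrac12\bigr|\ge\tfrac14$ has $|T_i|\le c\,\blocksize^{1-\eps}/(1-H_2(1/4))$; since every $j\notin T_i$ is hit with probability $>1/4$, $\Pr[a_i\in\setb_i]\ge\tfrac14\Pr[a_i\notin T_i]$, so $\Pr[a_i\in\setb_i]\le\delta$ forces $\Pr[a_i\in T_i]\ge 1-4\delta$; conditioning on this event caps $H(\rho_i)\le 1+\log|T_i|+4\delta\log\blocksize\le(1-\eps+4\delta)\log\blocksize+O_\eps(1)$, hence $\log\blocksize-H(\rho_i)\ge(\eps-4\delta)\log\blocksize-O_\eps(1)$. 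The averaging and the final ``re-run with $\eps/2$'' to clean up the constant on Bob's side then work as you describe, with the remaining $\eps$-dependent constant absorbed into the $\Omega$ on Alice's side, exactly as the lemma statement permits. Two small corrections: the threshold $\delta$ is not an absolute constant but must be taken $\le O(\eps)$ (say $\delta\le\eps/8$), which is fine since you get it by $O_\eps(1)$-fold repetition; and when invoking the richness lemma the rich-column count $v$ really must be taken as the set of ``good'' $\setb$ (those with $\le 3\blocksize/4$ elements per block), since only those rows are guaranteed $\ge(\blocksize/4)^\blockcount$ ones, which is precisely where your $2^{-O(\blockcount)}$ loss in $|\mathcal X|$ comes from---worth saying explicitly, but it does not change anything downstream. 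Subject to those two clarifications, the proposal is correct.
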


The way to prove the data structure lower bounds from the communication lower bounds is by reductions to communication complexity: Alice and Bob execute a data structure query; Alice simulates the querier, and Bob simulates the data structure. Alice notifies Bob which cell she would like to access; Bob returns that cell, and they continue for $t$ rounds, which correspond to the $t$ probes. At the end of this process, Alice knows the answer to the query. Overall, Alice sends $t \log \dspace$ bits and Bob sends $tw$ bits. The rest is calculations, which we include here for completeness:

\begin{proof}[Lemma~\ref{thm:sd_comm_lb} $\Rightarrow$ Theorem~\ref{thm:sd_ds_lb}]
We know that the players must send a total of $\Omega(\setsize)$ bits, but the data structure implies a protocol where $t \log \dspace + tw$ bits are communicated. Therefore $t \log \dspace + tw \ge \Omega(\setsize)$ so $t \ge \Omega(\setsize / (\log \dspace + w))$.
\end{proof}

\begin{proof}[Lemma~\ref{thm:blsd_comm_lb} $\Rightarrow$ Theorem~\ref{thm:blsd_ds_lb}]
We know that either Alice sends $\Omega(\blockcount \log \blocksize)$ bits or Bob sends $\blocksize^{1-\eps} \blockcount$ bits. Therefore, either $t \log \dspace \ge \Omega(\blockcount \log \blocksize)$ or $tw \ge \blocksize^{1-\eps} \blockcount$. The conclusion follows easily.
\end{proof}

\subsection{Putting it Together} \label{sec:tying_it_together}

We now put the results of Section~\ref{sec:reduction_from_lsd} and \ref{sec:sd_and_lsd_lowerbounds} together to get our lower bounds. Note that in all lower bounds below we freely set the relation of $n$ and $\strlen$ in any way that gives the best lower bounds. Therefore, if one is interested in only a specific relation of $n$ and $\strlen$ (say $\strlen=n^{10}$) the lower bounds below are not guaranteed to hold. The typical ``worst'' dependence in our lower bounds (at least for the case where $w=\log \strlen$ and $\dspace=\poly(n)$) is roughly $\strlen=2^{\sqrt{n}}$.

Theorem \ref{thm:sd_ds_lb} together with Lemma \ref{lem:reduction_from_sd} immediately give:
\begin{theorem} \label{thm:grammar_lb_from_sd}
For any 2-sided-error data structure for the grammar random access problem, $t \ge \Omega(n / ( w + \log \dspace))$. And in terms of $\strlen$, $t \ge \Omega(\log \strlen / ( w + \log \dspace))$.

When setting $w=1$ and $\dspace=poly(n)$ (polynomial space in the bit-probe model), we get that $t \ge \Omega(n/ \log n)$. And in terms of $\strlen$, $t \ge \Omega(\log \strlen / \log \log \strlen)$.
\end{theorem}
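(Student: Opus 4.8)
The plan is to chain the reduction of Lemma~\ref{lem:reduction_from_sd} with the data-structure lower bound of Theorem~\ref{thm:sd_ds_lb}; the composition is almost tautological, so the real content is just bookkeeping the relationship between $n$, $\strlen$, and the $SD$ universe size $\setsize$.

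First I would argue that any data structure for the grammar random access problem yields, as a black box and with no loss in any resource, a data structure for $SD_\setsize$. Given $\setb \subseteq [\setsize]$, the preprocessing phase builds the grammar $G_\setb$ of Lemma~\ref{lem:reduction_from_sd} (size $n = 2\setsize+1$, deriving the binary string $s_\setb$ of length $\strlen = 2^\setsize$) and then runs the assumed grammar data structure on $G_\setb$; since computation is free in the cell-probe model this is a legitimate preprocessing step, and the resulting structure occupies $\dspace$ words of $w$ bits. To answer a query $\seta \subseteq [\setsize]$, one interprets the characteristic vector of $\seta$ as an integer $i \in [\strlen]$ — this is exactly the lexicographic index used in Lemma~\ref{lem:reduction_from_sd} — probes the grammar structure for $s_\setb[i]$, and reports ``disjoint'' iff that bit is $1$. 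By Lemma~\ref{lem:reduction_from_sd} this is correct, and it makes exactly $t$ cell probes. For randomized structures the same composition works verbatim, because the input map $\setb \mapsto G_\setb$ and the output map (the identity on a single bit) are deterministic, so the $2$-sided error probability is preserved.

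Then I would invoke Theorem~\ref{thm:sd_ds_lb} on the $SD_\setsize$ structure just produced, obtaining $t \ge \Omega(\setsize/(w+\log\dspace))$, and substitute parameters. Since $\setsize = (n-1)/2 = \Theta(n)$ this reads $t \ge \Omega(n/(w+\log\dspace))$, and since $\setsize = \log\strlen$ it also reads $t \ge \Omega(\log\strlen/(w+\log\dspace))$. For the bit-probe corollary I would set $w=1$ and $\dspace=\poly(n)$, so that $w + \log\dspace = O(\log n)$; plugging in gives $t \ge \Omega(n/\log n)$, and using $n = \Theta(\log\strlen)$ once more (hence $\log n = \Theta(\log\log\strlen)$) gives $t \ge \Omega(\log\strlen/\log\log\strlen)$.

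I do not expect a genuine obstacle here. The only point worth checking is that the reduction does not secretly inflate the word size or the cell count: it does not, since both the index computation on the query side and the grammar construction on the preprocessing side are free in the cell-probe model, so $\dspace$, $w$, and $t$ are inherited unchanged from the grammar data structure. The substantive work — squeezing out a bound that beats the trivial $\log\strlen/\log\dspace$ by routing these same ideas through the blocked variant $BLSD_{\blocksize,\blockcount}$ via Lemma~\ref{lem:reduction_from_blsd} and Theorem~\ref{thm:blsd_ds_lb} — is what the remainder of this section will handle.
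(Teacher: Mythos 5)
Your proof is correct and takes exactly the paper's approach (which the paper itself labels ``trivial,'' noting only that $n=\Theta(\setsize)$ and $\strlen=2^{\Theta(\setsize)}$); you have simply spelled out the black-box composition of Lemma~\ref{lem:reduction_from_sd} with Theorem~\ref{thm:sd_ds_lb} and the resulting parameter substitutions in full detail.
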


\begin{proof}
Trivial, since $n=\Theta(\setsize)$ and $\strlen=2^{\Theta(\setsize)}$.
\end{proof}

Theorem \ref{thm:blsd_ds_lb} together with Lemma \ref{lem:reduction_from_blsd} give:
\begin{theorem} \label{thm:grammar_lb_from_blsd_det}
Assume $w = \omega(\log \dspace)$. Let $\eps>0$ be any arbitrarily small constant. For any 2-sided-error data structure for the grammar random access problem, $t \ge n/w^{\frac{1+\eps}{1-\eps}}$. And in terms of $\strlen$, $t \ge \frac{\log \strlen}{\log \dspace \cdot w^{\frac{\eps}{1-\eps}}}$.

When setting $w=\log \strlen$ and $\dspace =\poly(n)$ (polynomial space in the cell-probe model with cells of size $\log \strlen$), there is another constant $\delta$ such that we get that $t \ge n^{1/2 - \delta}$. And in terms of $\strlen$, $t \ge (\log \strlen)^{1-\delta}$.
\end{theorem}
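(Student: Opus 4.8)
The plan is to combine the reduction of Lemma~\ref{lem:reduction_from_blsd} with the data-structure lower bound of Theorem~\ref{thm:blsd_ds_lb} and then optimize the choice of the BLSD parameters $\blocksize,\blockcount$ against the grammar parameters $n$, $\strlen$ and $w$. First I would check that the reduction costs nothing in the cell-probe model: given any data structure for grammar random access with space $\dspace$, query time $t$ and word size $w$, and any $\setb\subseteq[\blocksize\blockcount]$, build the grammar $G_{\setb}$ of Lemma~\ref{lem:reduction_from_blsd} (size $n=2\blocksize\blockcount+1$, deriving a string $s_{\setb}$ of length $\strlen=\blocksize^\blockcount$) and preprocess it. To answer a $BLSD_{\blocksize,\blockcount}$ query, take the blocked set $\seta$, compute its lexicographic rank $i(\seta)\in[\blocksize^\blockcount]$, and probe the $i(\seta)$-th bit of $s_{\setb}$; by the lemma this bit is $1$ iff $\seta\cap\setb=\emptyset$. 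The rank computation uses no cell probes, so this is a $BLSD_{\blocksize,\blockcount}$ data structure with the same $\dspace$, $t$ and $w$. The polynomial slack in grammar size is irrelevant; only the relations $n=\Theta(\blocksize\blockcount)$ and $w=\log\strlen=\blockcount\log\blocksize$ will be used, and since we may pick $\blocksize,\blockcount$ freely the resulting worst-case relation between $n$ and $\strlen$ is ours to choose.

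Next, Theorem~\ref{thm:blsd_ds_lb} yields, for every small constant $\eps>0$, $t\ge\Omega(\min(\blockcount\log\blocksize/\log\dspace,\ \blocksize^{1-\eps}\blockcount/w))$. The substance is the optimization of this minimum. Writing $\blockcount\approx n/(2\blocksize)$ and using $w=\blockcount\log\blocksize$, the first term is exactly $w/\log\dspace$ and the second is exactly $\blocksize^{1-\eps}/\log\blocksize$; the first decreases and the second increases in $\blocksize$, so I would choose $\blocksize$ (equivalently, the split of $n$ between $\blocksize$ and $\blockcount$) at the crossing point, which works out to $\blocksize\approx(n/\log\dspace)^{1/(2-\eps)}$ up to a $\polylog$ factor. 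Bumping $\eps$ slightly to swallow the $\log\blocksize$ and $\log\dspace$ factors — this is where the hypothesis $w=\omega(\log\dspace)$ enters — gives $t\ge n/w^{(1+\eps)/(1-\eps)}$, and rewriting through $w=\log\strlen=\blockcount\log\blocksize$ gives the equivalent form $\log\strlen/(\log\dspace\cdot w^{\eps/(1-\eps)})$. This proves the first assertion.

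Finally I would specialize to $w=\log\strlen$ and $\dspace=\poly(n)$, so $\log\dspace=O(\log n)$ and the crossing point becomes the essentially symmetric split $\blocksize\approx\blockcount\approx\sqrt n$ (up to a $\polylog$ factor, absorbed into $\eps$); this is exactly the ``bad'' regime $\strlen=\blocksize^\blockcount\approx 2^{\sqrt n}$, $w=\log\strlen\approx\sqrt n$. Plugging in, the first term is of order $\sqrt n$ while the second, which governs the minimum, is $\approx n^{(1-\eps)/2}/\log n$; hence $t\ge\Omega(n^{(1-\eps)/2}/\log n)\ge n^{1/2-\delta}$ for a suitable $\delta=\delta(\eps)$, and since then $\log\strlen\approx n^{1/2}$, i.e.\ $n\approx(\log\strlen)^2$, the same bound reads $t\ge(\log\strlen)^{1-\delta}$ after relabeling. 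The reduction itself is essentially tautological (it realizes, block by block, the tensor-product structure of the BLSD communication matrix, as the authors remark), so the one place that needs care is the parameter bookkeeping: the exponent $1-\eps$ on $\blocksize$ in Theorem~\ref{thm:blsd_ds_lb}, together with the stray $\log\blocksize$ and $\log\dspace$ factors, obstruct clean exponents, and one has to verify that $w=\omega(\log\dspace)$ is precisely what lets them cost only an arbitrarily small additive loss in the exponent — turning $1/(1-\eps)$ into $(1+\eps)/(1-\eps)$, and $1/2$ into $1/2-\delta$.
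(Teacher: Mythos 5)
The overall strategy — chaining Lemma~\ref{lem:reduction_from_blsd} with Theorem~\ref{thm:blsd_ds_lb} and then optimizing $\blocksize,\blockcount$ subject to $\blocksize\blockcount\approx n/2$ — is exactly what the paper does, and your reduction argument and your handling of the second assertion (the $w=\log\strlen$, $\dspace=\poly(n)$ case yielding $t\ge n^{1/2-\delta}$) are correct. Your crossing-point equation $\blocksize^{2-\eps}\approx n\log^2\blocksize/\log\dspace$ is, up to $\polylog$ factors, the very substitution the paper uses for the second part of the theorem.

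However, there is a real gap in your treatment of the first assertion. You write ``using $w=\blockcount\log\blocksize$'' and then treat $w$ as a quantity that varies with $\blocksize$, finding the $\blocksize$ where $w/\log\dspace$ crosses $\blocksize^{1-\eps}/\log\blocksize$. This yields a choice $\blocksize\approx(n/\log\dspace)^{1/(2-\eps)}$ that depends only on $n$ and $\dspace$, not on the data structure's actual word size $w$. But $w$ is a fixed parameter of the data structure, and the first claim is supposed to hold for \emph{every} $w=\omega(\log\dspace)$. Your substitution only produces the bound $t\ge n/w^{(1+\eps)/(1-\eps)}$ for the one value of $w$ at which your crossing happens to put $w=\blockcount\log\blocksize$; for other values — say $w=\log^2 n$ with $\dspace=\poly(n)$, which is allowed since $\log^2 n=\omega(\log n)$ — your fixed $\blocksize\approx n^{1/(2-\eps)}$ gives only $t\gtrsim n^{(1-\eps)/(2-\eps)}$, whereas the theorem claims $t\ge n/\polylog(n)$, a much stronger bound. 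The paper handles the general $w$ by instead choosing $\blocksize$ as a function of $w$ and $\dspace$, namely $\blocksize\approx(w/\log\dspace)^{1/(1-\eps)}\cdot\polylog$, which is the crossing of the two terms of \eqref{eq:blsd_ds_lb} when $w$ is held fixed and only the free split $\blocksize\blockcount\approx n/2$ is varied. The repair is straightforward — drop the identification $w=\blockcount\log\blocksize$, balance $\blockcount\log\blocksize/\log\dspace$ against $\blocksize^{1-\eps}\blockcount/w$ directly in $\blocksize$ — but as written your argument proves the first assertion only in the special regime where $w$ coincides with $\log\strlen$ of the constructed instance.
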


The condition $w = \omega(\log \dspace)$ is a technical condition, which ensures that the value of $\blocksize$ we choose in the proof is at least $\omega(1)$. For $w \le \log \dspace$ one gets the best results just by reducing from SD, as in Theorem \ref{thm:grammar_lb_from_sd}.

\begin{proof}
For the first part of the theorem, substitute $\blocksize=(w/\log \dspace)^{1/(1-\eps)} \log(w/ \dspace)$, $\blockcount =n/\blocksize$, $\strlen=\blocksize^\blockcount$ into \eqref{eq:blsd_ds_lb}. For the second part of the theorem, substitute $\blockcount =\frac{\blocksize^{1-\eps} \log n}{\log^2 \blocksize}$, $n=\blocksize \blockcount$ and $\strlen=\blocksize^\blockcount$. And for the result, set $\delta=\frac{2\eps}{1-\eps}$.
\end{proof}

\section{Lower Bound for Less-Compressible Strings} \label{sec:rclowerbound}

\subsection{The Range Counting Lower Bound}

In the above reduction, the worst case came from strings that can be compressed superpolynomially. However, for many of the kinds of strings we expect to encounter in practice, superpolynomial compression is unrealistic. A more realistic range is polynomial compression or less. In this section we discuss the special case of strings of length $\Theta(n^{1+\eps})$.
%\footnote{Our results also hold for strings of length $n^c$ that can be represented with a CFG of size $n$, for any $c \ge 2$. We defer the issue of other compression ratios to future work.}
We show that for this class of strings, the Bille et al.~\cite{bille-2010} result is also (almost) tight by proving an $\Omega(\log n/\log \log n)$ lower bound on the query time, when the space used is $O(n \cdot \polylog n)$. This is done by reduction from the range counting problem on a 2D (two-dimensional) grid. Leaving the proof to Appendix~\ref{sec:rclowerbound}, we have the following lower bound for the range counting problem (see Definition~\ref{def:rangecounting} for details).

\begin{lemma} \label{lem:2dlower}
Any data structure for the 2D range counting problem using $O(n \polylog n)$ space requires query time $\Omega(\log n/\log \log n)$ in the cell probe model with cell size $O(\log n)$.
\end{lemma}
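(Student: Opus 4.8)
The plan is to extend \patrascu's~\cite{patrascu8unifying} reduction for the square $[n]\times[n]$ grid, re-engineering its hard instances so that they live on a grid whose short side is only $n^\eps$; the point is that Pătrașcu's bound for $[n]\times[n]$ does \emph{not} imply ours, since the thin-grid instances form a \emph{sub}family of the square-grid ones, so the thin-grid lower bound is the stronger statement. Recall the skeleton of that reduction: an instance of $BLSD_{\blocksize,\blockcount}$ --- Bob's arbitrary set $\setb\subseteq[\blocksize\blockcount]$ and Alice's blocked query $\seta$ with one element per block --- is encoded by placing, for each element of $\setb$, a point whose two coordinates are obtained by splitting the $\blockcount$ block-indices into a ``high'' group and a ``low'' group and reading off the corresponding digits, arranged so that a constant number of dominance (prefix, modulo $2$) queries recovers, for each block $i$, whether Alice's chosen element of block $i$ lies in $\setb$; disjointness is the AND of these $\blockcount$ bits, which is precisely the ``$\blockcount$ parallel queries'' form for which the communication bound of Lemma~\ref{thm:blsd_comm_lb} is stated. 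Simulating a range counting data structure of space $\dspace$, word size $w$, query time $t$ by a protocol exactly as in the proofs of Theorems~\ref{thm:sd_ds_lb}~and~\ref{thm:blsd_ds_lb}, and then invoking \patrascu's refined counting analysis --- which exploits that the structure stores only $n$ points --- yields a query-time lower bound with numerator $\Theta(\log n)$ and denominator $\Theta\!\left(\log(\dspace w/n)\right)$, which is $\Theta(\log\log n)$ once $\dspace=O(n\polylog n)$ and $w=O(\log n)$.

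The new ingredient is to take the block-index split to be \emph{uneven}: route roughly a $(1-\eps)$-fraction of the blocks into the coordinate that will occupy $[n]$ and only an $\eps$-fraction into the coordinate that must occupy $[n^\eps]$, while choosing $\blocksize$ polylogarithmic in $n$ and forming a direct sum of $\Theta(n/\polylog n)$ independent $BLSD$ instances tiled along the wide axis. I would pick these parameters so that simultaneously (i) the total number of points is $\Theta(n)$, (ii) every $y$-coordinate takes one of at most $n^\eps$ values and every $x$-coordinate one of at most $n$ values, and (iii) $\log(\dspace w/n)=\Theta(\log\log n)$ as above. Because the ``wide'' group of blocks alone still carries $\Theta((1-\eps)\log n)$ bits of hardness, the analysis then gives $t\ge\Omega\!\left((1-\eps)\log n/\log\log n\right)=\Omega(\log n/\log\log n)$, matching the square-grid asymptotics up to a constant depending only on $\eps$.

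It then remains to check the routine items: that with an uneven split a single dominance/prefix-parity query still decodes the per-block membership bit; that after the direct sum the $y$-coordinates genuinely range over only $n^\eps$ distinct values while the point count stays $\Theta(n)$; and that all hidden constants depend on $\eps$ alone.

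The step I expect to be the main obstacle is reconciling (i) and (ii): keeping the point count at $\Theta(n)$ --- so that $O(n\polylog n)$ space is only $\polylog n$ space per point, which is exactly what makes the denominator $\log\log n$ rather than something smaller --- while compressing the whole layout into a grid of short side $n^\eps$. \patrascu's original, balanced split naturally wants both sides of size about $\sqrt n$, so the work is in the layout engineering (small $\blocksize$, enough independent copies, tiling purely along the wide axis) together with verifying that the resulting asymmetry does not weaken the communication bound below $\Omega(\log n/\log\log n)$.
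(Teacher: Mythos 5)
Your high-level instinct is right: re-engineer \patrascu's hard instances so that they live on a $[n]\times[n^\eps]$ grid, and you correctly note the thin-grid bound is not implied by the square-grid one. The paper does precisely this, but via a different and more economical route: it introduces the \emph{unbalanced butterfly graph} (Definition~\ref{def:butterfly}), in which the layer-$0$ vertices of $\instcount$ parallel small butterflies are identified, so that reachability on it reduces (Lemma~\ref{lem:rcandreachability}) to dominance counting on a grid with only $\tilde\Theta(n^\eps)$ rows; it then observes that an unbalanced-butterfly reachability structure is at least as powerful as a balanced-butterfly one, so \patrascu's butterfly reachability lower bound applies verbatim, and it only remains to plug in $\blocksize=\polylog n$ and $D=\Theta(\log n/\log\log n)$. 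Your description of the hard instance (``for each element of $\setb$ a point, split the block-indices into high and low groups'') bypasses the butterfly intermediate altogether; but the butterfly is precisely what supplies the dyadic-rectangle structure needed to encode reachability as dominance counting, and without a thin-grid analogue of it your ``uneven split'' is a slogan, not a construction.

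The more serious gap is where you get the $\Theta(\log\log n)$ denominator. You say you will simulate the data structure ``exactly as in the proofs of Theorems~\ref{thm:sd_ds_lb}~and~\ref{thm:blsd_ds_lb}'' and then appeal to ``\patrascu's refined counting analysis.'' But those proofs are single-query reductions, giving denominator $\log\dspace+w=\Theta(\log n)$; paired with a numerator of $\Theta(\log n)$, that yields only $t=\Omega(1)$, not $\Omega(\log n/\log\log n)$. The denominator $\log(\dspace w/n)=\Theta(\log\log n)$ comes from a genuinely different, multi-query argument: the querier makes $\Theta(n/\polylog n)$ depth-$D$ reachability queries in parallel, and all the probed cell addresses across all queries are encoded jointly against a universe of size $\dspace$, which is what produces the $\log(\dspace/n)$ savings. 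That argument is built into \patrascu's butterfly theorem (which is why the paper black-boxes it); it is not recoverable from the single-query Lemma~\ref{thm:blsd_comm_lb}, and a ``direct sum of $\Theta(n/\polylog n)$ independent $BLSD$ instances'' would require a direct-sum theorem you do not cite. Until the multi-query analysis is supplied, or the reduction is routed through the butterfly graph as in the paper, the bound you would actually obtain is trivial.
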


Recall that the version of range counting we consider is actually dominance counting modulo 2 on the $n \times n^{\eps}$ grid.

The main idea behind our reduction is to consider the length-$n^{1+\eps}$ binary string consisting of the answers to all $n^{1+\eps}$ possible range queries (in the natural order, i.e. row-by-row, and in each row from left to right); call this the \emph{answer string} of the corresponding range counting instance. We prove that this string can be represented using a grammar of size $O(n \log n)$. The reduction obviously follows, since a range query can then just be answered by asking for one bit of the compressed string.

\begin{lemma} \label{lem:rcreduction}
For any range counting problem in 2D, the answer string can be represented by a CFG of size $O(n \log n)$.
\end{lemma}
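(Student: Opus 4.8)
The plan is to exploit two structural features of the answer string $S$ (which has length $\strlen=n^{1+\eps}$): it has very few distinct rows, and any two consecutive distinct rows differ only in a few maximal runs whose numbers sum to $O(n)$. Write $S=R_1R_2\cdots R_{n^{\eps}}$, where $R_y\in\{0,1\}^n$ is row $y$, so that $R_y[x]$ is the parity of the number of input points dominated by $(x,y)$. First I would note that $R_y=R_{y-1}\oplus E_y$ (coordinate-wise \textsc{xor}), where $E_y[x]$ is the parity of the number of input points $(a,b)$ with $b=y$ and $a\le x$; in particular $E_y$ is a piecewise-constant $\{0,1\}$-string that changes value exactly at the $x$-coordinates of the points lying on grid-row $y$, hence consists of at most $p_y+1$ maximal runs, where $p_y$ is the number of such points. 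Since the $n$ points are distinct, $\sum_y p_y=n$; letting $y_1<\cdots<y_k$ (with $k\le n$) be the grid-rows that carry a point, this shows that $S$ has the form $(0^n)^{a_0}\rho_1^{a_1}\rho_2^{a_2}\cdots\rho_k^{a_k}$, where $\rho_0=0^n$, $\rho_j=\rho_{j-1}\oplus E_{y_j}$, the $a_j$ are nonnegative and sum to $n^{\eps}$ (with $a_1,\dots,a_k\ge 1$), and $\sum_j r_j=O(n)$ with $r_j$ the number of maximal runs of $E_{y_j}$.

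The second and main step is to build the grammar for the $\rho_j$'s one after another, sharing symbols between consecutive rows by a \emph{persistent} (path-copying) balanced binary (segment) tree over the $n$ positions. For each $j$ and each dyadic subinterval $v\subseteq[n]$ I maintain a symbol $s_j(v)$ deriving the restriction of $\rho_j$ to $v$, together with a symbol $\bar s_j(v)$ deriving its bitwise complement. To pass from $\rho_{j-1}$ to $\rho_j$ one flips $\rho_{j-1}$ on each maximal run of $E_{y_j}$ whose value is $1$. Processing the dyadic nodes by increasing height: a node whose interval lies entirely inside an unchanged run reuses $s_{j-1}(v),\bar s_{j-1}(v)$; a node whose interval lies entirely inside a flipped run takes $s_j(v):=\bar s_{j-1}(v)$ and $\bar s_j(v):=s_{j-1}(v)$ --- both of these cost no new rule, only a pointer change; and only a node whose interval straddles a run boundary must be rebuilt, by concatenating the (already-chosen) symbols of its two children and of their complements, at a cost of two new rules. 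Since a fixed boundary point lies strictly inside at most one dyadic interval of each height, there are $O(r_j\log n)$ straddling nodes at step $j$, so step $j$ creates $O(r_j\log n)$ rules; the initial tree for $\rho_0=0^n$ costs only $O(\log n)$ rules (all nodes of a given height have the same value and the same complement). Summing over $j$ gives $O(n\log n)$ rules, and in particular a symbol deriving $\rho_j$ for every $j$. The point of carrying $\bar s_j(v)$ alongside $s_j(v)$ is precisely that it turns ``flip a subinterval'' into a zero-cost operation; without it a segment-tree update would recreate $\Theta(n)$ symbols per row and yield no compression at all.

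Finally I would assemble $S$ itself: for each $j$ form $\rho_j^{a_j}$ from the symbol deriving $\rho_j$ using $O(\log a_j)=O(\log n)$ rules (repeated squaring together with the binary expansion of $a_j$), and then concatenate the resulting $k+1$ blocks with $O(k)=O(n)$ further rules. The total is again $O(n\log n)$. The only routine technicalities --- padding $n$ and $n^{\eps}$ to powers of two (or using the canonical $O(\log n)$-piece dyadic decomposition of $[1,n]$), and checking that every rule produced is a legal SLP rule, i.e. a binary concatenation or a single terminal `0'/`1' --- are absorbed into the $O(\log n)$ factor and hold by construction.

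I expect the crux to be the persistence argument of the second step: one must bound the straddling dyadic nodes by $O(r_j\log n)$ and process them bottom-up so that the children they reference are already available, and one must observe that maintaining the complemented symbols $\bar s_j(v)$ makes a range-flip free rather than catastrophic. The row-structure observations of the first step and the power-of-two assembly of the third are easy but necessary scaffolding.
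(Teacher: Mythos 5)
Your proof is essentially the paper's proof: a balanced binary tree over the $n$ positions, with a complement symbol $\bar s(v)$ maintained alongside each $s(v)$ so that flipping a dyadic subtree costs nothing, and path-copying persistence that charges $O(\log n)$ new rules per point as one sweeps row by row. Your write-up is somewhat more careful than the paper's (the explicit run decomposition $R_y=R_{y-1}\oplus E_y$, the bottom-up rebuilding of only the $O(r_j\log n)$ straddling nodes, and the repeated-squaring assembly of identical consecutive rows), but the central devices --- negation symbols and leaf-to-root updates --- are exactly those of the paper.
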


The idea behind the proof of the lemma is to simulate a sweep of the point-set from top to bottom by a dynamic one-dimensional range tree. The symbols of the CFG will  correspond to the nodes of the tree. With each new point encountered, only $2 \log n$ new symbols have to be introduced.

\begin{proof}
Assume for simplicity that $n$ is a power of $2$.
It is easy to see that the answer string could be built by concatenating the answers in a row-wise order, just as illustrated in Figure~\ref{fig:answerstring}.

\begin{figure}[!ht]
  \centering
  \includegraphics[width=5cm,height=5cm]{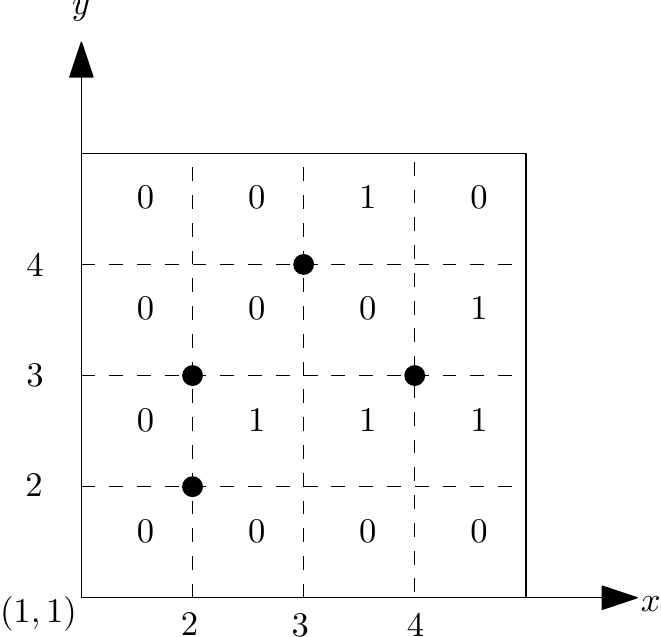}
  \caption{The answer string for this instance is 0000011100010010. The value in the grids are the query results for queries falling in the corresponding cell, including the bottom and left boundaries, excluding the right and top boundaries.}
  \label{fig:answerstring}
\end{figure}

We are going to build the string row by row. Think of a binary tree representing the CFG built for the first row of the input. The root of the tree derives the first row of the answer string, whose two children respectively represent the answer string for the left and the right half of the row. In this way the tree is built recursively. The leaves of the tree are terminal symbols in $\{0,1\}$. Thus there are $2n-1$ symbols in total for the whole tree. At the same time we also maintain the \emph{negations} of the symbols in the tree, i.e., making a new symbol $g'_i$ for each $g_i$ in the tree, where $g'_i = 1-g_i$ if $g_i$ is a terminal symbol, or $g'_i = g'_j g'_k$ if $g_i = g_j g_k$.

The next row in the answer string will be built by changing at most $2p\log n$ symbols in the old tree, where $p$ is the number of new points in the next row. The symbols for the new row are  built by re-using most of the symbols in the old row, and introducing new symbols where needed. We process the new points one by one, and for each one, the modifications needed all lie in a path from a leaf to the root of the tree. Assuming the update is the path $h_{1}, h_{2}, \ldots, h_{\log n}$, the new tree will contain an update of $h_1, h_2, \ldots, h_{\log n}$. Also, all the right children of these nodes will be switched with their negations (this switching step does not actually require introducing any new symbols).
An intuitive picture of the process is given in Figure~\ref{fig:tree}.

\begin{figure}[!ht]
\subfigure[The trees built for the first and second rows for the example in Firgure~\ref{fig:answerstring}.]{
\centering
\includegraphics[width=9cm]{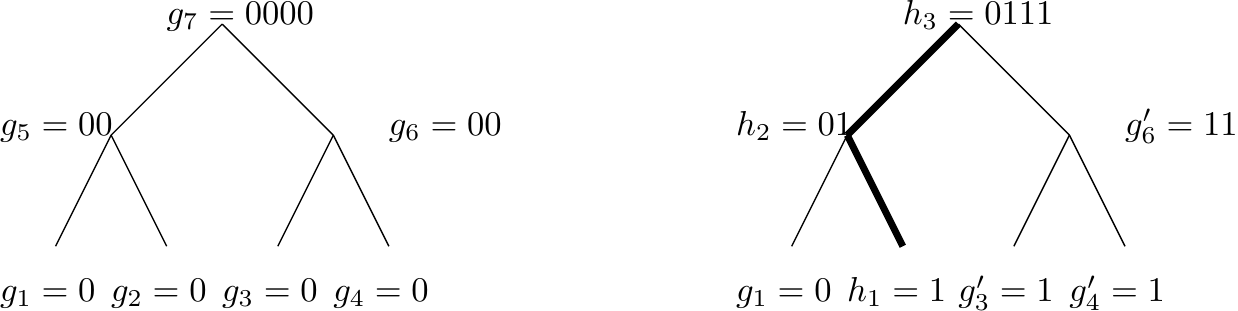}
\label{fig:treea}
}
\hspace{0.5cm}
\subfigure[The general process illustrated by graph. The black parts stands for the negations of corresponding symbols.]{
\centering
\includegraphics[width=5cm]{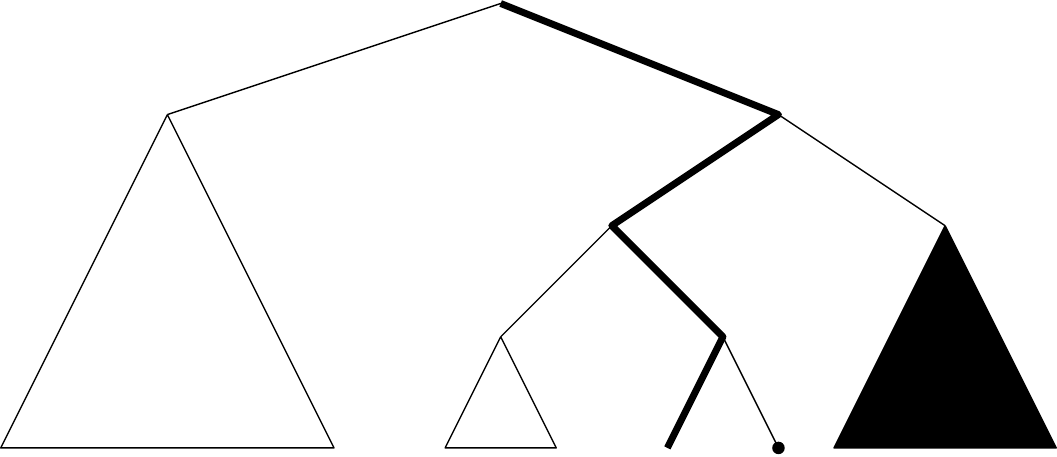}
\label{fig:treeb}
}
\caption{Examples for building answer strings.}
\label{fig:tree}
\end{figure}

It is easy to see for each new point, $2\log n$ additional symbols are created. $\log n$ of them are the new symbols ($h_1, \ldots, h_{\log n}$), and another $\log n$ of them are their negations ($h'_1, \ldots, h'_{\log n}$). After all, we use $2n-1+(n^{\eps}-1)\cdot 2\log n = O(n+n^{\eps}\log n)$ symbols to derive the whole answer string.
\end{proof}

By using the above lemma, we have the lower bound of the grammar random access problem.

\begin{theorem} \label{thm:n2lower}
Any data structure using space $O(n \polylog n)$ for the grammar random access problem requires $\Omega(\log n / \log \log n)$ query time.
\end{theorem}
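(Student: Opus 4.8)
The plan is to chain together the range-counting lower bound with the grammar-construction reduction, exactly as the section's narrative suggests. First I would recall the statement of Lemma~\ref{lem:2dlower}: any data structure for 2D range counting (dominance, mod 2, on the $[n]\times[n^\eps]$ grid) using $O(n\polylog n)$ space needs $\Omega(\log n/\log\log n)$ query time in the cell-probe model with cells of $O(\log n)$ bits. Next I would invoke Lemma~\ref{lem:rcreduction}, which says the length-$n^{1+\eps}$ answer string of any such range-counting instance is derivable by a CFG of size $O(n\log n)$. The reduction is then immediate: suppose there were a data structure $D$ for the grammar random access problem using space $O(m\polylog m)$ with query time $o(\log m/\log\log m)$, where $m$ denotes the grammar size. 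Given a range-counting instance on $n$ points, build the CFG $G$ of size $m = O(n\log n)$ deriving the answer string, feed $G$ to $D$, and answer a query $(x,y)$ by computing the lexicographic index of cell $(x,y)$ in the row-by-row ordering and asking $D$ for that single character of the derived string.

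The second step is to check that the parameters survive the composition. Since $m = O(n\log n)$, we have $\log m = \Theta(\log n)$ and $\polylog m = \polylog n$, so a space bound of $O(m\polylog m)$ for $D$ translates to $O(n\polylog n)$ space for the resulting range-counting data structure — within the regime covered by Lemma~\ref{lem:2dlower}. Likewise the cell size stays $O(\log n)$ (we may take the grammar-access word size to be $\log \strlen$ with $\strlen = n^{1+\eps}$, which is $\Theta(\log n)$). A query time of $t$ for $D$ gives query time $t+O(1)$ for range counting (the $O(1)$ for the index arithmetic, which is free in the cell-probe model anyway). Hence $t + O(1) \ge \Omega(\log n/\log\log n)$, so $t \ge \Omega(\log n/\log\log n) = \Omega(\log m/\log\log m)$, which is the claimed bound (after renaming $m$ back to $n$, the grammar size). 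I would state the theorem in terms of the grammar size $n$ and note that the string length is $\Theta(n^{1+\eps})$, matching the "less-compressible" regime advertised in the section.

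The routine but necessary bookkeeping is making sure the indexing map from a query $(x,y)$ to a position in the answer string is the one used in Lemma~\ref{lem:rcreduction} — namely row-by-row, left to right within each row, with the boundary conventions fixed in Figure~\ref{fig:answerstring} — so that bit $s_G[\mathrm{index}(x,y)]$ genuinely equals the parity of the dominance count for $[1,x]\times[1,y]$. This is guaranteed by the construction in the proof of Lemma~\ref{lem:rcreduction}, so I would just cite it rather than redo it.

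The main obstacle is not in this theorem at all — the composition is essentially tautological once the two lemmas are in hand — but in Lemma~\ref{lem:2dlower}, the range-counting lower bound on the lopsided $[n]\times[n^\eps]$ grid, which is deferred to the appendix and is where the real work (extending \patrascu's argument from the square grid to the $n^\eps$-height grid, presumably via a padding/direct-sum argument over $n^{1-\eps}$ independent sub-grids reducing to \blsd) must be done. For the present theorem the only thing to be careful about is that the slack introduced by the $O(n\log n)$ grammar size does not eat the $\polylog$ budget or change the $\log/\log\log$ asymptotics — and it does not, since $\log(n\log n)=\Theta(\log n)$.
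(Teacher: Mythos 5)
Your proposal is correct and takes essentially the same route as the paper: compress the answer string to a CFG of size $O(n\log n)$ via Lemma~\ref{lem:rcreduction}, observe that a grammar random access data structure yields a range counting data structure with matching space and query time, and invoke Lemma~\ref{lem:2dlower}. Your writeup is in fact slightly cleaner on the parameter bookkeeping (explicitly noting $\log(n\log n)=\Theta(\log n)$ and that the word size $\log\strlen=\Theta(\log n)$ in this regime), and it correctly frames the argument around a hypothetical data structure rather than the paper's somewhat confusing citation of the Bille et al.\ upper bound (Lemma~\ref{lem:up1}) inside a lower-bound proof.
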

\begin{proof}
For inputs of the range counting problem, we compress the answer string to a CFG of size $O(n \log n)$ according to Lemma~\ref{lem:rcreduction}. After that we build a data structure for the random access problem on this CFG using Lemma~\ref{lem:up1}. For any query $(x,y)$ of the range counting problem, we simply pass the query result on the index $(y-1) n+x-1$ on the answer string as an answer. By Bille et al.~\cite{bille-2010} this makes a data structure using $O(n \log n)$ space with $t(n\log n)$ query time for the range counting problem. According to Lemma~\ref{lem:2dlower} the lower bound for range counting is $\Omega(\log n/\log \log n)$ for $O(n \polylog n)$ space, thus $t(n\log n)= \Omega(\log n/\log \log n) \Rightarrow t(n) = \Omega(\log n /\log \log n)$.
\end{proof}

%Note that the above reduction does not work for higher dimension range counting since we do not have a nice structured tree to assist us in building the grammar for higher dimension answer strings. (A natural attempt is to replace the 1D range tree that we used above by a 2D range tree and perform a similar sweep procedure, but this does not seem to work.)
Note that natural attempt is to replace the 1D range tree that we used above by a 2D range tree and perform a similar sweep procedure, but this does not seem to work for building higher dimension answer strings.

\section{Optimality}

In this section, we show that the upper bound in Bille et al.~\cite{bille-2010} is nearly optimal, for two reasons. First, it is clear that by Theorem~\ref{thm:n2lower}, the upper bound in Lemma~\ref{lem:up1} is optimal, when the space used is $O(n\polylog n)$.

Second, in the cell-probe model with words of size $\log \strlen$ we also have the following lemma by Bille et al.~\cite{bille-2010}.

\begin{lemma} \label{lem:up1}
There is a data structure for the grammar random access problem with $O(n)$ space and $O(\log \strlen)$ time. This data structure works in the word RAM with words of size $\log \strlen$.
%, so in particular in the cell-probe model with words of size $\log \strlen$.
\end{lemma}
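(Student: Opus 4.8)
The plan is to reconstruct the Bille et al.~\cite{bille-2010} data structure, whose core idea is to augment the parse DAG of the SLP with weights so that a root-to-leaf descent can locate the $i$-th character in time proportional to the depth of the DAG, after the depth has been artificially bounded by a balancing step. First I would recall that an SLP of size $n$ is a DAG with $n$ nodes, each of out-degree $2$ (internal rules $g_i \to g_j g_k$) or $0$ (terminals). For each symbol $g_i$ we precompute and store $|g_i|$, the length of the string it derives; this takes $O(1)$ words per node in the word-RAM with word size $\log\strlen$, since every $|g_i| \le \strlen$, so $O(n)$ space total. With these lengths in hand, a naive query for position $i$ walks down from the start symbol: at a node $g_i \to g_j g_k$ we compare $i$ with $|g_j|$ and either recurse into $g_j$ (with the same $i$) or into $g_k$ (with $i$ replaced by $i - |g_j|$), stopping when we reach a terminal, whose character we return. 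Each step probes $O(1)$ cells, so the running time is the length of the root-to-leaf path, i.e. the parse-tree depth, which can be $\Theta(n)$ in the worst case (a ``caterpillar'' SLP).

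The key step is therefore the balancing: I would invoke the classical transformation (used by Bille et al., and going back to balancing of straight-line programs / the heavy-path decomposition of Harel--Tarjan style) that converts the given SLP into an equivalent one — or, more precisely, an equivalent weighted DAG structure supporting the same descent — of depth $O(\log \strlen)$ while only blowing up the number of nodes by a constant factor, so the space stays $O(n)$. The standard route is to decompose the parse DAG into heavy paths: from any node, the ``heavy'' child is the one deriving the longer string; heavy edges form paths, and any root-to-leaf walk crosses $O(\log\strlen)$ heavy paths, since each time we leave a heavy path (take a light edge) the remaining string length at least halves. One then equips each heavy path with an auxiliary balanced structure — e.g. a biased / weighted search tree over the ``hanging'' light subtrees of the path, with weights equal to their derived lengths — so that, given an offset into the string derived by the top of the path, one can in $O(\log(\text{total length}/\text{subtree length}))$ probes jump to the correct light child; telescoping over the $O(\log\strlen)$ heavy paths on the descent gives total query time $O(\log\strlen)$. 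All the stored quantities (lengths, path indices, weighted-tree pointers) are at most $\strlen$ or at most $n$, hence fit in $O(1)$ words each, so the augmented structure still occupies $O(n)$ words.

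The last step is bookkeeping: verify that the auxiliary per-heavy-path structures total $O(n)$ words (each light subtree is charged to exactly one heavy path, and a weighted search tree over $k$ leaves uses $O(k)$ nodes), and that every operation performed along the descent — comparing $i$ to a stored length, subtracting, following a pointer, a biased-tree step — is a legal word-RAM operation on $\log\strlen$-bit words. I expect the main obstacle to be stating the heavy-path / biased-tree machinery precisely enough that the ``$O(\log\strlen)$ heavy paths, each costing $O(\log(\text{ratio}))$ probes, telescopes to $O(\log\strlen)$'' argument is airtight; this is exactly the technically delicate part of~\cite{bille-2010}, so in the write-up I would either reproduce their argument in this form or simply cite~\cite[Theorem~1]{bille-2010} for the construction and only sketch why it lives in the word RAM with word size $\log\strlen$, which is immediate since all stored integers are bounded by $\strlen$.
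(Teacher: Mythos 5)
Your proposal matches the paper exactly in spirit: the paper states Lemma~\ref{lem:up1} as a direct citation to Bille et al.~\cite{bille-2010} and gives no proof of its own. Your sketch of the heavy-path decomposition with per-path biased (weighted) search trees and the telescoping cost argument is a faithful high-level reconstruction of that construction, so either reproducing it at this level of detail or simply citing the theorem, as you suggest at the end, is precisely what is warranted here.
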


\begin{lemma} \label{lem:up2}
There is a data structure for the grammar random access problem with $O(n)$ space and $O(n \log n / \log \strlen)$ time.
\end{lemma}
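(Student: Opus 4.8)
The plan is to exploit the fact that in the cell-probe model all computation is free, so it suffices to store the \emph{entire} grammar in $O(n)$ cells and show it can be recovered with $O(n\log n/\log\strlen)$ probes; a querier that holds the whole grammar can then extract $s[i]$ at no cost, since the grammar determines the string.

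First I would bound the description length of the grammar. Since the grammar is an SLP with $n$ symbols over a binary alphabet, each rule is either a terminal rule (encodable in $O(1)$ bits) or a concatenation rule $g_i\to g_j g_k$ with $j,k<i$ (encodable in $O(\log n)$ bits, as it is just a pair of pointers into $[i-1]$). Hence the list of all $n$ rules, in a fixed order, is a bit string of length $O(n\log n)$. The data structure simply stores this bit string, packing $\Theta(\log\strlen/\log n)$ rules per cell; this is well defined because $\strlen\ge n$, so each $\log\strlen$-bit cell holds at least one rule. The number of cells is $O(n\log n/\log\strlen)$, which is $O(n)$ since $\log n\le\log\strlen$, so the space is $O(n)$ words.

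On a query $i$, the algorithm reads all $O(n\log n/\log\strlen)$ cells and thereby reconstructs the grammar. It then computes, for free, the length $|g|$ of the string derived by every symbol $g$ by a single bottom-up pass over the rules, and traces the derivation path from the start symbol $g_n$ downward: at a node $g\to g'g''$ it descends into $g'$ if the running index is at most $|g'|$, and otherwise descends into $g''$ after subtracting $|g'|$ from the index; the terminal eventually reached equals $s[i]$. All of this is free in the cell-probe model, so the query cost is exactly the $O(n\log n/\log\strlen)$ probes used to read the structure.

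There is essentially no obstacle beyond the accounting: one need only check that $\strlen\ge n$ makes the packing well defined and that $\log n\le\log\strlen$ makes the cell count $O(n)$. Note that this data structure is an artifact of the cell-probe model's free computation, which is precisely the point: combined with Lemma~\ref{lem:up1}, it certifies that no cell-probe lower bound substantially stronger than the $\sqrt{n\log n}$ barrier of Theorem~\ref{thm:grammar_lb_from_blsd_det} can be obtained.
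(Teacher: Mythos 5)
Your proof is correct and follows essentially the same route as the paper's: encode the $n$ rules in $O(n\log n)$ bits, pack them into $O(n\log n/\log\strlen)$ cells of size $\log\strlen$, read the whole grammar on a query, and let the cell-probe model's free computation finish the job. You just spell out the packing arithmetic and the derivation-tracing step, which the paper leaves implicit; the argument is the same.
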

\begin{proof}
This is a trivial bound. The number of bits to encode the grammar is $O(n \log n)$ since each rule needs $O(\log n)$ bits. The cell size is $O(\log \strlen)$, so in $O(n\log n/\log \strlen)$ time the querier can just read all of the grammar. Since computation is free in the cell-probe model, the querier can get the answer immediately.
\end{proof}

Thus, we have the following corollary, by using Lemma~\ref{lem:up1} when $n=\Omega(\log^2 \strlen/\log\log \strlen)$ and Lemma~\ref{lem:up2} when the case $n=O(\log^2 \strlen/\log\log \strlen)$. This corollary implies that our lower bound of $\Omega(n^{1/2-\eps})$ is nearly the best one can hope in the cell-probe model.
\begin{corollary}
Assuming $w = \log \strlen$, there is a data structure in the cell-probe model with space $O(n)$ and time $O(\sqrt{n\log n})$.
\end{corollary}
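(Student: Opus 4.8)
The plan is to combine the two upper bounds from Lemma~\ref{lem:up1} and Lemma~\ref{lem:up2} by choosing whichever gives the smaller query time for a given value of $n$ relative to $\log\strlen$. Set $\ell=\log\strlen$ for brevity. Lemma~\ref{lem:up1} gives query time $O(\ell)$, and Lemma~\ref{lem:up2} gives query time $O(n\log n/\ell)$; both use $O(n)$ space and cells of size $w=\ell$. The crossover between the two bounds happens when $\ell \approx n\log n/\ell$, i.e.\ when $\ell^2 \approx n\log n$, i.e.\ $\ell \approx \sqrt{n\log n}$. So the natural case split is exactly the one indicated in the text preceding the corollary: use Lemma~\ref{lem:up1} when $n=\Omega(\ell^2/\log\ell)$ and Lemma~\ref{lem:up2} when $n=O(\ell^2/\log\ell)$.

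First I would handle the case $n=O(\ell^2/\log\ell)$. Here Lemma~\ref{lem:up2} applies, giving query time $O(n\log n/\ell)$. I want to show this is $O(\sqrt{n\log n})$, i.e.\ that $n\log n/\ell = O(\sqrt{n\log n})$, which rearranges to $\sqrt{n\log n}=O(\ell)$, i.e.\ $n\log n = O(\ell^2)$. Since we are in the regime $n = O(\ell^2/\log\ell)$, we have $n\log n = O(\ell^2 \log n/\log\ell)$; as long as $\log n = O(\log\ell)$ (which holds because $n \le \strlen$, so $\log n \le \ell$, and more carefully because in this regime $n$ is polynomially bounded in $\ell$ so $\log n = \Theta(\log\ell)$ up to constants), this is $O(\ell^2)$, and we are done. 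A small technical point to verify: if $\ell$ is much larger than $n$, then $\log n$ could be much smaller than $\log\ell$, which only helps; the only thing to rule out is $\log n \gg \log\ell$, impossible since $n\le\strlen=2^\ell$.

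Next I would handle the case $n=\Omega(\ell^2/\log\ell)$. Here Lemma~\ref{lem:up1} applies, giving query time $O(\ell)$. I want $\ell = O(\sqrt{n\log n})$, i.e.\ $\ell^2 = O(n\log n)$. From $n = \Omega(\ell^2/\log\ell)$ we get $\ell^2 = O(n\log\ell)$, and again using $\log\ell = O(\log n)$ (which holds whenever $\ell$ is at least a constant and $n$ is at least a constant; more precisely $\log\ell \le \log n + O(1)$ is false in general, so here I would instead note that in this regime $\ell \le \sqrt{n\log\ell}$ forces $\ell$ to be polynomially bounded in $n$, hence $\log\ell = O(\log n)$), we get $\ell^2 = O(n\log n)$ as desired. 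Finally I would remark that both data structures use $O(n)$ space and the cell-probe model with $w=\ell=\log\strlen$, so the combined structure does too, by simply storing both and selecting at query time based on the (fixed, known) relationship between $n$ and $\strlen$.

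The main obstacle is purely the bookkeeping around the relationship between $\log n$ and $\log\ell = \log\log\strlen$ in the two regimes: one must be careful that the $\log n$ factor appearing in Lemma~\ref{lem:up2} and in the target bound $\sqrt{n\log n}$ does not spoil the crossover calculation. The clean way to dispatch this is to observe that in whichever regime we are selecting a given data structure, the parameter being ``charged'' (either $\ell$ via $n \ge \ell^2/\log\ell$, or $n$ via $n \le \ell^2/\log\ell$) is polynomially related to the other, so all logarithms involved are equal up to constant factors, and the two $\Theta$-estimates match at the crossover up to constants. There is no deeper difficulty; this corollary is essentially an exercise in balancing two known upper bounds.
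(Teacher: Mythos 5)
Your proposal is correct and takes the same route the paper sketches: balance Lemma~\ref{lem:up1} against Lemma~\ref{lem:up2} at the crossover $n\approx\log^2\strlen/\log\log\strlen$, applying whichever gives the smaller query time. The only simplification worth noting is that the bound $\log\strlen\le n$ (since the derived string has length at most $2^n$) immediately gives $\log\ell\le\log n$, which shortcuts the bookkeeping in your second case.
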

\section{Extensions and Variants}
In this section we discuss a bit about what the lowed bound means for LZ-based compression, and ways to extend the lower bound to BWT-based compressions.

\subsection{LZ-based Compression}

The typical cases for grammar-based compression is Lempel-Ziv~\cite{ziv1977universal,lempel1976complexity}. First let us look at LZ77. For LZ77 we have the following lemma.

\begin{lemma}[Lemma~9 of~\cite{charikar2005smallest}]
The length of the LZ77 sequence for a string is a lower bound on the size of the smallest grammar for that string.
\end{lemma}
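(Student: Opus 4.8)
The plan is to take an arbitrary SLP $G$ for $s$ of size $g$ and extract from it an \emph{LZ77-legal factorization} of $s$, i.e.\ a left-to-right decomposition of $s$ into blocks each of which is either a single symbol or a copy of a substring that starts earlier in $s$ (source and destination allowed to overlap, as is standard for LZ77), using at most $g$ blocks. Since the greedy longest-match LZ77 parse uses the fewest blocks among all such factorizations — a classical exchange argument — this yields $z(s)\le g$, where $z(s)$ is the number of LZ77 phrases, and taking $G$ to be the smallest grammar proves the lemma. So the whole content is: build one LZ77-legal factorization with at most $g$ blocks.

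The factorization is read off the (unique) derivation tree $T$ of $G$. For a node $v$ write $\mathrm{str}(v)$ for the substring of $s$ derived at $v$ and $\mathrm{beg}(v)$ for its starting position. For a fixed position $i$ the nodes $v$ with $\mathrm{beg}(v)=i$ are exactly the ancestors of the $i$-th leaf whose subtree begins at $i$, so they form a path, which I call the \emph{left spine at $i$}; note that if a node on it derives a string occurring earlier in $s$, then so does every node below it on the spine. Sweep $s$ from left to right: once the prefix ending at position $i-1$ has been factored, if some node on the left spine at $i$ derives a string having an occurrence in $s$ starting strictly before $i$, emit the string of the topmost such node as one copy block; otherwise emit the single symbol $s[i]$; either way advance $i$ by the length of the block just emitted. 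Every emitted block is LZ77-legal by construction, so this is a valid factorization, and it remains only to bound the number of blocks by $g$.

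For the count, call a node $v$ of $T$ a \emph{leftmost node} if $\mathrm{beg}(v)$ is the first occurrence in $s$ of the string $\mathrm{str}(v)$. There are at most $g$ leftmost nodes: the map sending a non-root leftmost node to the pair consisting of the rule of its parent's symbol and the side (left or right) on which it occurs in that rule is injective, since two leftmost nodes with the same image derive the same string and hence occupy the identical (leftmost) interval of $s$, and a node of $T$ is determined by its interval (no symbol of an SLP derives the empty string); and the number of such (rule, side) pairs is $2\cdot(\text{number of binary rules})\le g$ (the root, which is always a leftmost node, is never the start of an emitted block when $|s|\ge 2$, and is harmless when $|s|=1$). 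It therefore suffices to charge the emitted blocks injectively to leftmost nodes. This is immediate in the case that a block is emitted at a node $v$ that is a \emph{left} child of its parent $p$: then $p$ lies on the same left spine as $v$ and strictly above it, so — the sweep having stopped at $v$ rather than at $p$ — $p$ derives no earlier-occurring string, hence $\mathrm{beg}(p)$ is the first occurrence of $\mathrm{str}(p)$ and $p$ is a leftmost node; and distinct blocks yield distinct $p$'s because at most one block starts at each position.

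The part that genuinely needs care, and which I expect to be the main obstacle, is the complementary case, in which the emitted node $v$ is a \emph{right} child of its parent: then that parent has already been passed by the sweep and is no longer on the current left spine, so one must still exhibit a leftmost node to charge $v$ to — the natural candidate lying near the left sibling of $v$ — and must do so without collisions across all cases. The superficially attractive alternative of charging a copy block directly to ``the leftmost node of $T$ carrying the same string'' does not work, because the first occurrence in $s$ of a substring need not be aligned with any node of $T$ that derives it: it may straddle several subtrees. Carrying out the right-child bookkeeping is exactly what pins the constant to $1$; a cruder version (charge each block to a child of a leftmost node) already gives $z(s)\le 2g^*(s)$, which would suffice for the uses made of the lemma in this paper. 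The two supporting facts — optimality of the greedy LZ77 parse, and legality of the extracted factorization — are routine.
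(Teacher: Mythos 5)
First, a point of reference: the paper does not prove this lemma at all --- it is imported verbatim from Charikar et al., and the text offers only the one-line remark that ``each rule in the grammar only contributes one entry for LZ77.'' So there is no in-paper proof to compare against; I compare against the standard argument that the one-liner gestures at.

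Your overall plan is sound: extract from an SLP of size $g$ an LZ77-legal left-to-right factorization of $s$ with at most $g$ blocks, then appeal to optimality of the greedy LZ77 parse (which you rightly treat as routine). Your factorization is indeed legal --- each block is a literal or a copy of an earlier substring, and the nested-or-disjoint structure of parse-tree intervals keeps sources to the strict left. The genuine gap, which you flag yourself, is the count: producing an injective charge from blocks to leftmost nodes when the emitted node $v$ is the top of the left spine, i.e.\ a right child of its parent $p^*$. Charging to $p^*$ fails on both points you would need. It collides: for $s=a^4$ with rules $g_0\to a$, $g_1\to g_0g_0$, $g_2\to g_1g_1$, your sweep emits the leaf at position $1$ (a left child) and the leaf at position $2$ (a right child), and both would be charged to $g_1[1,2]$. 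And it is not even clear that $p^*$ is a leftmost node (that $\mathrm{str}(p^*)$ first occurs at $\mathrm{beg}(p^*)$); this same uncertainty undermines the ``cruder $z\le 2g$'' fallback you offer, which also presupposes that every emitted node is a child of a leftmost node. So at present neither the $g$ bound nor the $2g$ bound is established.

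The standard proof takes a genuinely different route that sidesteps all of this. Work with the coarser notion ``first occurrence of the \emph{label}'' rather than ``first occurrence of the \emph{string}'': call a parse-tree node \emph{fresh} if it is the leftmost node carrying its symbol, and \emph{stale} otherwise. Do a preorder traversal from the root, recursing into fresh internal nodes and stopping at leaves and at stale nodes; the stopping nodes partition $s$, each stale stopping node has a strictly earlier, disjoint fresh twin with the same string (so the partition is LZ77-legal), and the count is by \emph{tree structure}, not by charging: the stopping nodes are the leaves of a binary tree whose internal nodes are the visited fresh nonterminals, hence the number of blocks is (visited fresh internal nodes)$\;+\;1\le n_2+1\le g$, where $n_2$ is the number of binary rules and $g$ is measured as total right-hand-side length, as in Charikar et al. Two consequences: you no longer need to worry, as you do, about first occurrences of substrings that straddle parse-tree nodes (only label-level repetition matters), and there is no right-child case to handle because nothing is being charged. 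One last nit on your own counting: the inequality $2\cdot(\text{binary rules})\le g$ needs $g$ to mean total right-hand-side length; if $g$ is the number of rules it can fail, so it is worth fixing the convention explicitly.
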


The basic idea of this lemma is to show that each rule in the grammar only contribute one entry for LZ77. Since LZ77 could compress any string with small grammar size into a smaller size, it can also compress the string $s_{\setb}$ in Lemma~\ref{lem:reduction_from_sd} into a smaller size. Thus the lower bound for grammar random access problem also holds for LZ77.

The reader might also be curious about what will happen for the LZ78~\cite{ziv1978compression} case. Unfortunately the lowerbound does not hold for LZ78. This is because LZ78 is a ``bad'' compression scheme that even the input is $0^n$ of all $0$'s, LZ78 can only compress the string to length of $\sqrt n$. But a random access on an all $0$ string is trivially constant with constant space. So we are not able to have any lower bounds for this case.

There are also lots of other variants of LZ-based compressions. As long as the compression is efficient like LZ77, we have the lower bound. Otherwise if it is like LZ78, we do not.

\subsection{Lower Bound for BWT-based Compression Access}

In last sections we talked about strings that could be compressed efficiently used by grammar-based compression scheme. But it might be an interesting question to ask if we take another compression approach, say, BWT-based compression, does the lower bound holds as well? We answer this question positively here. We claim that with a little modification used our ``hard instance'' used in last section could be efficiently compressed by BWT, so that our lower bound holds for BWT as well.

The BWT of a binary string $s \in \{0,1\}^N$ could be obtained by the following process. We use $s^k$ to denote the string which is the concatenation of the substring $s[k+1 \ldots N]+$`\$'$+s[1 \ldots k]$ where \$ is the end-of-string symbol lexicographically smaller than $0$ and $1$, and $s[1\ldots k]$ is the substring obtained by the first $k$ bits of $s$. The BWT of the string $s$ is the string formed by the last characters of list $\{ s^k \}$ for $ 0\leq k \leq \strlen$ \emph{after sorting}. This string is of length $\strlen+1$ but it will have long ``runs'', which means maximal consecutive $1$'s or $0$'s when omitting `\$'. For example in Figure~\ref{fig:bwt} there are $3$ runs `0',`111' and `00'.  We call the function defined in this process $\bwt: \{0,1\}^n \rightarrow \{0,1,\$\}^{n+1}$. And this $\bwt$ function is invertible according to~\cite{burrows1994block}, that is, given $\bwt(s)$, one can always recover $s$ in linear time.

\begin{figure}[!hbp]
\centering
\begin{minipage}[c]{2cm}
\centering
  \texttt{010110\$
  10110\$0
  0110\$01
  110\$010
  10\$0101
  0\$01011
  \$010110}
\end{minipage}
$\Rightarrow$
\begin{minipage}[c]{2.5cm}
\centering
  \texttt{\$010110 0
  0\$01011 1
  010110\$ \$
  0110\$01 1
  10\$0101 1
  10110\$0 0
  110\$010 0}
\end{minipage}
\caption{The Burrows-Wheeler Transformation for the string $S = 010110$. The strings on the right side are sorted in lexicographic order. Note that `\$' is the end-of-string character which is smaller than both $0$ and $1$. The value of the $\bwt$ function is the last column of the sorted list.} \label{fig:bwt}
\end{figure}

However, BWT itself is not a compression algorithm. But there are several approaches to compress the text efficiently after BWT, e.g., \cite{deorowicz2002second,manzini2001analysis}, first use MTF (move-to-front) encoding, and then arithmetic encoding. The compressed length is $n \log \strlen$. For binary strings, there is a much easier way to bound the number of bits for storing the compressed string. We can just save the length and an extra bit indicating $0$ or $1$ for each run, and $\log \strlen$ bits for storing the position of `\$'. Thus the compressed length is $n \log \strlen$.

\begin{lemma} \label{cor:bwtruns}
For a binary string $s$, if $\runs(s) = n$, then the BWT-based compressed representation of $s$ is less than $n \log \strlen$ bits.
\end{lemma}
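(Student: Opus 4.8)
The plan is to make the informal encoding sketched just before the lemma precise and then bound its bit-length by an elementary counting argument. Given the binary string $s$ of length $\strlen$, form $\bwt(s)$, a string of length $\strlen+1$ over $\{0,1,\$\}$ containing `\$' exactly once, and let $b \in \{0,1\}^{\strlen}$ be obtained from $\bwt(s)$ by deleting the `\$'. Write $r$ for the number of runs of $b$; since deleting one symbol can only merge two adjacent runs, $r \le \runs(\bwt(s))$, and it is this quantity — the number of BWT-runs, ignoring `\$' — that the hypothesis refers to, so we read it as $n = r$. The compressed representation records four things: the position of `\$' inside $\bwt(s)$, using $\ceil{\log(\strlen+1)}$ bits; the value ($0$ or $1$) of the first run of $b$, using $1$ bit; the number $r$ of runs, using $\ceil{\log(\strlen+1)}$ bits; and the ordered list $\ell_1,\ldots,\ell_r$ of run lengths of $b$. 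Because $\bwt$ is invertible (\cite{burrows1994block}), and because the last three items reconstruct $b$ while the first reinserts `\$', this is a faithful representation of $s$, so it suffices to bound its length.

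The run lengths are positive integers summing to $\strlen$, so $(\ell_1,\ldots,\ell_r)$ is a composition of $\strlen$ into exactly $r$ parts; there are at most $\binom{\strlen}{r}\le (e\strlen/r)^r$ such compositions, hence it can be written down in at most $\ceil{r\log(e\strlen/r)} = r\log\strlen - r\log(r/e) + O(1)$ bits. The three fixed-size fields contribute only $O(\log\strlen)$ further bits, so the total is $r\log\strlen - r\log(r/e) + O(\log\strlen)$. This is already $O(n\log\strlen)$, which matches the bound asserted in the surrounding discussion; and as soon as $r$ exceeds a suitable constant multiple of $\log\strlen$ — which is exactly the regime in which the BWT-based hard instances used later in this subsection live — the gain $-r\log(r/e)$ dominates the $O(\log\strlen)$ overhead and the total drops strictly below $r\log\strlen = n\log\strlen$, as claimed. (If one wants only the robust $O(n\log\strlen)$ statement, even the crude encoding that stores each $\ell_i$ in $\ceil{\log\strlen}$ bits works immediately, at the cost of losing the strict inequality for very small $n$.)

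There is no genuinely hard step here — the argument is pure bit-accounting — so the points that most need care are bookkeeping ones: that the relevant count is $\runs(\bwt(s))$ rather than $\runs(s)$ (they differ by at most the single merge caused by removing `\$'); that the decoder must be told $r$ so it knows how many lengths to read, which is the purpose of the second $\ceil{\log(\strlen+1)}$-bit field; and that the literal inequality ``$<n\log\strlen$'', as opposed to the weaker ``$O(n\log\strlen)$'', leans on the concavity of $\log$ together with the mild assumption $n=\Omega(\log\strlen)$ that is satisfied by the instances to which the lemma will be applied.
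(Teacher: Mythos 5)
Your proof is correct and takes the same route the paper only sketches in the sentence preceding the lemma: run-length encode $\bwt(s)$ (with `\$' removed), record the `\$' position separately, and appeal to invertibility of BWT. Your reading of $\runs(s)$ as the number of runs of $\bwt(s)$ after omitting `\$' is the right one — that is the paper's (implicit) convention, fixed when ``runs'' is defined for the BWT and used again in Lemma~\ref{lem:bwtmain}, whose proof explicitly tracks the BWT of $s'_i$. Where you go beyond the paper is in the accounting: the paper's own tally (length plus value bit per run, plus $\log\strlen$ for `\$') gives $n(\log\strlen+1)+\log\strlen$, which is \emph{not} literally below $n\log\strlen$; your compositions bound $\binom{\strlen-1}{r-1}\le(e\strlen/r)^r$ is what actually buys the strict inequality, and you correctly flag that this costs the side condition $n=\Omega(\log\strlen)$ — indeed the lemma as written fails for $n=O(1)$ because of the `\$'-position field alone. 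Since the downstream use in Theorem~\ref{thm:bwtmain} only needs $m=O(n\log\strlen)$, both your tighter encoding and the cruder fixed-width one suffice there; your version is simply the more honest justification of the literal statement.
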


In the appendix, we will prove the following lemma for a string $s'_{\setb}$ which is quite similar to the string $s_{\setb}$ used in Lemma~\ref{lem:reduction_from_blsd}.

\begin{lemma}[Sketched Version of Lemma~\ref{lem:bwtmain}]
There is a string $s'_{\setb}$ constructed for blocked set $\setb$ and a mapping $\sigma: 2^{\{0,1\}^n}$ such that for any set $\seta \in [n]$, $s'_{\setb}[\sigma(\seta)] = (\seta \cap \setb = \emptyset)$. And the length of $s'_{\setb}$ is $(4\blocksize)^\blockcount$, while $\runs(s'_{\setb})=O(n)$.
\end{lemma}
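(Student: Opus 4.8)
The plan is to modify the $BLSD$ grammar construction from Lemma~\ref{lem:reduction_from_blsd} so that the derived string, besides still encoding all disjointness answers, also has only $O(n)$ runs — which then lets Lemma~\ref{cor:bwtruns} finish the job. The obstacle is that the string $s_{\setb}$ from Lemma~\ref{lem:reduction_from_blsd} is essentially a sparse characteristic vector of the non-intersecting blocked sets, so it can have up to $\Theta(\blocksize^\blockcount)$ runs — far too many. So the real content is to inflate the alphabet/length by a constant factor per block (hence the $(4\blocksize)^\blockcount$ length, roughly $4^\blockcount$ times bigger than $\blocksize^\blockcount$) in order to ``pad out'' the 1's into long blocks, while keeping the indexing of answers recoverable through the stated mapping $\sigma$.

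Concretely, I would build $s'_{\setb}$ recursively like $g_\blockcount$ before, but where each block contributes, instead of $\blocksize$ copies of either $g_{i-1}$ or a zero-block, a length-$4\blocksize$ arrangement: for each of the $\blocksize$ positions in block $i$, emit a long padding run (a constant-length multiple of the current string length made of a single repeated symbol) whose bit value ($0$ or $1$) records whether that position lies in $\setb$, followed by a recursive copy of $g_{i-1}$ only when the position is \emph{not} in $\setb$, and a matching-length zero-block when it is. The point of the padding runs is that each level of recursion only multiplies the run count by a bounded factor \emph{plus} an additive $O(\blocksize)$ (from the $\blocksize$ padding runs at that level), so unrolling the recurrence $\runs_i \le \runs_{i-1} + O(\blocksize)$ over $\blockcount$ levels gives $\runs(s'_{\setb}) = O(\blocksize\blockcount) = O(n)$; meanwhile the length recurrence $\strlen_i = 4\blocksize\cdot\strlen_{i-1}$ (constants chosen so the paddings dominate) gives $\strlen = (4\blocksize)^\blockcount$. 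The grammar size stays $O(n)$ because, exactly as in Lemma~\ref{lem:reduction_from_blsd}, producing a $4\blocksize$-fold structured concatenation from $g_{i-1}$ and the relevant zero-blocks costs $O(\blocksize)$ rules per block (the zero-blocks of all the needed lengths are precomputed with $O(\blocksize\blockcount)$ rules).

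I would then define $\sigma$ to pick out, for each blocked query set $\seta$, the single index inside the recursively-placed $g_{i-1}$-copies that it selects at every level: $\sigma(\seta)$ threads through block $1,\dots,\blockcount$ choosing the sub-offset corresponding to $\seta$'s element in each block (and landing inside the live $g_{i-1}$ copy rather than a padding run), so that by the same induction as before, $s'_{\setb}[\sigma(\seta)] = 1$ iff $\seta$ avoids $\setb$ at every block, i.e.\ iff $\seta\cap\setb=\emptyset$. The induction hypothesis is the obvious analogue: $g_i$, restricted to the positions reachable by $\sigma$ when the first $i$ coordinates of $\seta$ are fixed, correctly answers disjointness against $\setb\cap[\blocksize i]$. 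The step I expect to be fiddly, and the one worth stating carefully, is choosing the padding lengths and the placement of the recursive copy within each length-$4\blocksize$ block so that (a) two adjacent padding runs of equal bit value, or a padding run adjacent to a zero-block, merge rather than multiplying the run count, keeping the additive-$O(\blocksize)$ recurrence honest, and (b) $\sigma$ never lands in a padding position — but this is just a matter of picking the layout (e.g.\ alternating or suitably grouping the padding-bit values and inserting the $g_{i-1}$ copy at a fixed offset), and once fixed, the run-count and length recurrences, the grammar-size count, and the correctness induction all go through routinely. Finally, combining $\runs(s'_{\setb})=O(n)$ with Lemma~\ref{cor:bwtruns} yields a BWT-compressed representation of $O(n\log\strlen)$ bits, and the reduction to the data-structure lower bound then proceeds exactly as in Section~\ref{sec:tying_it_together} with $\strlen=(4\blocksize)^\blockcount$ in place of $\blocksize^\blockcount$, which only affects constants in the exponent.
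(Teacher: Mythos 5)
There is a genuine gap, and it is in the heart of the argument. In this paper, $\runs(s)$ does \emph{not} mean the number of maximal $0$/$1$-runs in $s$ itself; it means the number of runs in $\bwt(s)$ (this is what Lemma~\ref{cor:bwtruns} compresses by run-length encoding, and what the paper's proof explicitly tracks: ``the process of computing BWT of $s'_i$ as first computing the BWT of $s'_{i-1}$ then inserting the rest bits''). Your analysis, by contrast, is entirely about the run count of the derived string $s'_{\setb}$ directly, and never engages with the BWT at all. And for the string itself, $O(n)$ runs is simply false for generic $\setb$: every blocked $\seta$ with $\seta\cap\setb=\emptyset$ forces a $1$ at position $\sigma(\seta)$, and for a typical $\setb$ there are exponentially many such $\seta$, interleaved among the exponentially many positions corresponding to intersecting $\seta$'s, so $s'_{\setb}$ has exponentially many runs regardless of what constant-length padding you insert. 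In fact, this is exactly why BWT is relevant: the permutation $\bwt$ clusters equal characters together.

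Your recurrence also doesn't survive inspection even on its own terms. You write that ``each level of recursion only multiplies the run count by a bounded factor plus an additive $O(\blocksize)$'' and then jump to the purely additive recurrence $\runs_i\le\runs_{i-1}+O(\blocksize)$. These are incompatible. In your construction, block $i$ emits a copy of $g_{i-1}$ (each carrying all of its runs) once for each position $j$ of block $i$ that is \emph{not} in $\setb$, so the true recurrence is $\runs(g_i)\approx k_i\cdot\runs(g_{i-1})+O(\blocksize)$ where $k_i=\blocksize-|\setb\cap\text{block}_i|$. Whenever $k_i\ge 2$ for many $i$ (which is the interesting case), this explodes exponentially, and no choice of constant-factor padding fixes a multiplicative recurrence. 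The paper's additive bound $\runs(s'_i)\le\runs(s'_{i-1})+O(\blocksize)$ is a statement about $\bwt(s'_i)$, and establishing it requires the detailed analysis of the sorted rotations: showing that the $4\blocksize$-aligned suffixes inherit the BWT ordering of $s'_{i-1}$, and that the remaining suffixes cluster into $O(\blocksize)$ groups according to their length-$5\cdot 4\blocksize$ prefixes, with each group contributing identical final characters (Lemmas~\ref{lem:leq4b}, \ref{lem:20beq}, \ref{lem:notallzero}). The specific choice of $h'_i$ via $0\mapsto 1011$ and $1\mapsto 1101$ is engineered precisely so that these suffix-prefix structure lemmas hold. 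None of this appears in your proposal, and without it the $O(n)$ bound is unjustified.
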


By using this lemma we prove the following main theorem for BWT, with details in appendix.

\begin{theorem} \label{thm:bwtmain}
For random access a bit in a binary string compressed by BWT-based methods with $m = n\log \strlen$ bits. Assume $w = \omega(\log \dspace)$. Let $\eps>0$ be any arbitrarily small constant. For any 2-sided-error data structure, $t \ge n/w^{\frac{1+\eps}{1-\eps}}$. And in terms of $\strlen$, $t \ge \frac{\log \strlen}{\log \dspace \cdot w^{\frac{\eps}{1-\eps}}}$.
When setting $w=\log \strlen$ and $\dspace=poly(m)$ (polynomial in $m$), there is a constant $\delta$ such that we get  $t \ge m^{1/3 - \delta}$. And in terms of $\strlen$, $t \ge (\log \strlen)^{1-\delta}$.
\end{theorem}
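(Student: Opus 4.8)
\emph{Proof strategy.} The plan is to run the argument of Theorem~\ref{thm:grammar_lb_from_blsd_det} through a BWT-compressible hard instance instead of a small grammar. Given an instance of $BLSD_{\blocksize,\blockcount}$, that is a set $\setb\subseteq[\blocksize\blockcount]$, let $s'_{\setb}$ be the string promised by (the full version of) Lemma~\ref{lem:bwtmain}: it has length $\strlen=(4\blocksize)^\blockcount$, it satisfies $s'_{\setb}[\sigma(\seta)]=(\seta\cap\setb=\emptyset)$ for every blocked query set $\seta$ via the index map $\sigma$, and crucially $\runs(s'_{\setb})=O(n)$ with $n=\Theta(\blocksize\blockcount)$. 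By Lemma~\ref{cor:bwtruns}, the BWT-based compressed encoding of $s'_{\setb}$ occupies $O(n\log\strlen)$ bits, i.e.\ $m=n\log\strlen$ up to a constant factor that $\dspace=\poly(m)$ absorbs. Hence any cell-probe data structure that stores this $m$-bit encoding in $\dspace$ cells of $w$ bits and answers a random-access query in $t$ probes yields, with no loss in $\dspace$, $w$, or $t$, a data structure for $BLSD_{\blocksize,\blockcount}$: on query $\seta$ the querier computes $\sigma(\seta)$ for free and returns $s'_{\setb}[\sigma(\seta)]$, and two-sided error is preserved verbatim. Thus Theorem~\ref{thm:blsd_ds_lb} applies and gives $t\ge\Omega\!\left(\min\!\left(\blockcount\log\blocksize/\log\dspace,\ \blocksize^{1-\eps}\blockcount/w\right)\right)$; it remains to choose how $\blocksize,\blockcount$ depend on $n,m,\strlen,\dspace,w$.

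For the first bound I would take exactly the substitution from the proof of Theorem~\ref{thm:grammar_lb_from_blsd_det} — namely $\blocksize=(w/\log\dspace)^{1/(1-\eps)}\log(w/\dspace)$ and $\blockcount=n/\blocksize$ — now with $\strlen=(4\blocksize)^\blockcount$ in place of $\blocksize^\blockcount$. Since $\log(4\blocksize)=\Theta(\log\blocksize)$ this changes nothing asymptotically, so feeding these values into \eqref{eq:blsd_ds_lb} reproduces $t\ge n/w^{\frac{1+\eps}{1-\eps}}$, and rewriting through $\log\strlen=\Theta(\blockcount\log\blocksize)$ gives $t\ge\log\strlen/(\log\dspace\cdot w^{\frac{\eps}{1-\eps}})$. (The hypothesis $w=\omega(\log\dspace)$ is used here exactly as before to guarantee $\blocksize=\omega(1)$, and it also holds in the setting of the second bound since there $w=\log\strlen$ is polynomial in $\blocksize$ while $\log\dspace=\Theta(\log m)=\Theta(\log\blocksize)$.)

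For the second bound set $w=\log\strlen=\Theta(\blockcount\log\blocksize)$. Then the two terms of \eqref{eq:blsd_ds_lb} become $\Theta(\blockcount\log\blocksize/\log\dspace)$ and $\Theta(\blocksize^{1-\eps}/\log\blocksize)$, the latter independent of $\blockcount$; I would balance them by taking $\blockcount$ roughly $\blocksize^{1-\eps}$ up to $\polylog$ factors — say $\blockcount=\blocksize^{1-\eps}\log m/\log^2\blocksize$, in the style of Theorem~\ref{thm:grammar_lb_from_blsd_det} — together with $n=\Theta(\blocksize\blockcount)$, $\strlen=(4\blocksize)^\blockcount$, $m=n\log\strlen$, and $\dspace=\poly(m)$. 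With these choices $m=\Theta(\blocksize\blockcount^2\log\blocksize)=\blocksize^{3-O(\eps)}$ up to $\polylog$ factors, while the query bound is $\blocksize^{1-\eps}$ up to $\polylog$ factors; comparing exponents gives $t=\Omega(m^{1/3-\delta})$ for a constant $\delta=\Theta(\eps)$, and rewriting through $\log\strlen=\Theta(\blockcount\log\blocksize)$ gives $t=\Omega((\log\strlen)^{1-\delta})$.

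Everything above is a recombination of Theorem~\ref{thm:blsd_ds_lb}, Lemma~\ref{cor:bwtruns}, and Lemma~\ref{lem:bwtmain} plus the parameter tuning of Section~\ref{sec:tying_it_together}; the genuinely new work sits in Lemma~\ref{lem:bwtmain}, which I expect to be the main obstacle (and which is deferred to the appendix). The string $s_{\setb}$ of Lemma~\ref{lem:reduction_from_blsd} already encodes BLSD disjointness, but it has no run-length guarantee, so one must redesign it into $s'_{\setb}$ so that $\runs(s'_{\setb})=O(n)$ while (i) still realizing $s'_{\setb}[\sigma(\seta)]=(\seta\cap\setb=\emptyset)$ under a cheaply computable index map $\sigma$ and (ii) keeping the length polynomially related to the original $\blocksize^\blockcount$ — it is precisely this padding that inflates the length to $(4\blocksize)^\blockcount$. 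Verifying that the modified block-by-block construction produces only $O(n)$ runs (so that Lemma~\ref{cor:bwtruns} is applicable) is the delicate step; once it is in hand, the rest is the bookkeeping carried out above.
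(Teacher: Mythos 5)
Your proposal is correct and follows essentially the same route as the paper, which at this point is very terse (it cites Lemma~\ref{lem:bwtmain} and Lemma~\ref{cor:bwtruns}, picks $\delta = \frac{100\eps}{1-\eps}$, and otherwise defers to the proof of Theorem~\ref{thm:grammar_lb_from_blsd_det}). You supply what the paper leaves as ``easy to verify,'' in particular the calculation $m = n\log\strlen = \Theta(\blocksize\blockcount^2\log\blocksize) \approx \blocksize^{3-O(\eps)}$ under the balancing choice $\blockcount \approx \blocksize^{1-\eps}$, which is exactly the reason the exponent degrades from $\tfrac12$ (grammar-size $n$) to $\tfrac13$ (BWT bit-count $m$); your $\delta=\Theta(\eps)$ is consistent with the paper's choice, which is simply a more generous constant to absorb polylog slack.
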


%The details of the proof is in the appendix due to lack of space.

\section*{Acknowledgement}
We thank Travis Gagie and Pawel Gawrychowski for helpful discussions.
\bibliographystyle{plain}
\bibliography{gram}

\begin{thebibliography}{1}

\bibitem{bille-2010}
Philip Bille, Gad~M. Landau, Rajeev Raman, Srinivasa Rao, Kunihiko Sadakane,
  and Oren Weimann.
\newblock Random access to grammar compressed strings.
\newblock In {\em SODA}, 2010.

\bibitem{patrascu8unifying}
M.~Patrascu.
\newblock {Unifying the Landscape of Cell-Probe Lower Bounds}.
\newblock {\em SIAM Journal on Computing (SICOMP)}, 40(3), 2011.

\bibitem{yao81}
A.C.C. Yao.
\newblock {Should tables be sorted?}
\newblock {\em Journal of the ACM (JACM)}, 28(3):615--628, 1981.

\end{thebibliography}


\begin{thebibliography}{10}

\bibitem{babai1986complexity}
L.~Babai, P.~Frankl, and J.~Simon.
\newblock {Complexity classes in communication complexity theory}.
\newblock In {\em FOCS}, pages 337--347, 1985.

\bibitem{bar2004information}
Z.~Bar-Yossef, TS~Jayram, R.~Kumar, and D.~Sivakumar.
\newblock {An information statistics approach to data stream and communication
  complexity}.
\newblock {\em Journal of Computer and System Sciences}, 68(4):702--732, 2004.

\bibitem{bille-2010}
P.~Bille, G.M. Landau, R.~Raman, S.~Rao, K.~Sadakane, and O.~Weimann.
\newblock Random access to grammar compressed strings.
\newblock In {\em SODA}, 2011.

\bibitem{burrows1994block}
M.~Burrows and D.J. Wheeler.
\newblock {\em {A block-sorting lossless data compression algorithm}}.
\newblock Digital, Systems Research Center, 1994.

\bibitem{charikar2005smallest}
M.~Charikar, E.~Lehman, D.~Liu, R.~Panigrahy, M.~Prabhakaran, A.~Sahai, and
  A.~Shelat.
\newblock {The smallest grammar problem}.
\newblock {\em IEEE Transactions on Information Theory}, 51(7):2554--2576,
  2005.

\bibitem{claude2009self}
F.~Claude and G.~Navarro.
\newblock {Self-indexed text compression using straight-line programs}.
\newblock {\em Mathematical Foundations of Computer Science 2009}, pages
  235--246, 2009.

\bibitem{deorowicz2002second}
S.~Deorowicz.
\newblock {Second step algorithms in the Burrows-Wheeler compression
  algorithm}.
\newblock {\em Software: Practice and Experience}, 32(2):99--111, 2002.

\bibitem{lempel1976complexity}
A.~Lempel and J.~Ziv.
\newblock {On the complexity of finite sequences}.
\newblock {\em IEEE Transactions on information theory}, 22(1):75--81, 1976.

\bibitem{manzini2001analysis}
G.~Manzini.
\newblock {An analysis of the Burrows-Wheeler transform}.
\newblock {\em Journal of the ACM}, 48(3):430, 2001.

\bibitem{miltersen1995data}
P.B. Miltersen, N.~Nisan, S.~Safra, and A.~Wigderson.
\newblock {On data structures and asymmetric communication complexity}.
\newblock In {\em STOC}, page 111. ACM, 1995.

\bibitem{patrascu8unifying}
M.~Patrascu.
\newblock {Unifying the Landscape of Cell-Probe Lower Bounds}.
\newblock {\em SIAM Journal on Computing}, 40(3), 2011.

\bibitem{razborov1992distributional}
A.A. Razborov.
\newblock {On the distributional complexity of disjointness}.
\newblock {\em Theoretical Computer Science}, 106(2):385--390, 1992.

\bibitem{yao81}
A.C.C. Yao.
\newblock {Should tables be sorted?}
\newblock {\em Journal of the ACM}, 28(3):615--628, 1981.

\bibitem{ziv1977universal}
J.~Ziv and A.~Lempel.
\newblock {A universal algorithm for sequential data compression}.
\newblock {\em IEEE Transactions on information theory}, 23(3):337--343, 1977.

\bibitem{ziv1978compression}
J.~Ziv and A.~Lempel.
\newblock {Compression of individual sequences via variable-rate coding}.
\newblock {\em IEEE Transactions on Information Theory}, 24(5):530--536, 1978.

\end{thebibliography}
\appendix

\section{Range Counting Lower Bound} \label{sec:rclowerbound}

In this section we prove the lower bound for range counting (Lemma~\ref{lem:2dlower}). The way of proving the lower bound for the range counting problem in \patrascu~\cite{patrascu8unifying} is by reduction from the reachability problem on butterfly graphs. However, the version used in \patrascu's paper can only be reduced to range counting on $[n] \times [n]$ grid. In order to make the reduction work for $[n] \times [n^\eps]$ grid, we need the following unbalanced butterfly graph for the reachability problem.

\begin{definition}[(Unbalanced) Butterfly Graph] \label{def:butterfly}
A graph $G=(V,E)$ is called butterfly graph iff there exists some integers $\instcount,\blockcount,\blocksize,D$ such that $|V| = \instcount \blockcount + \frac{\instcount \blockcount}{D}$, $|E| = \instcount \blockcount \blocksize$, $\log_{\blocksize} \frac{\blockcount}{D} \geq D$, and the vertices of $G$ could be labeled uniquely as $(h,b, a_1, \ldots, a_D)$ where $h \in [\instcount], b \in \{0\}\cup[D]\text{(representing the layer)}, a_1, \ldots, a_D \in [\blocksize]$. And an directed edge $(u,v) \in E$ iff $u$ labeled $(h, i, a_1, \ldots, a_{i-1}, a_i, a_{i+1}, \ldots, a_D)$ and $v$ labeled $(h, i+1, a_1, \ldots, a_{i-1}, a'_i, a_{i+1}, \ldots, a_D)$.

If we merge the vertices having difference $h$ labels in the layer $0$ of the graph together, i.e., make the vertices labeled with $(h_1, 0, a_1, \ldots, a_D)$ and $(h_2,0, a_1, \ldots, a_D)$ the same vertex, we call it an unbalanced butterfly graph. All the vertices on other layers can only have one unique label. So there are $\frac{\blockcount}{D}$ vertices in layer $0$ and $\frac{\instcount \blockcount}{D}$ vertices in other layers. For an example, the reader may refer to Figure~\ref{fig:newgraph}.
\end{definition}

Note here the $D \leq \log_{\blocksize} \frac{\blockcount}{D}$ condition is enforced to make sure that the number of different labels $\instcount (D+1) \blocksize^{D} \leq |V|=\instcount \blockcount + \frac{\instcount \blockcount}{D}$.

\begin{figure}[!hbp]
  \centering
  \includegraphics[width=7cm]{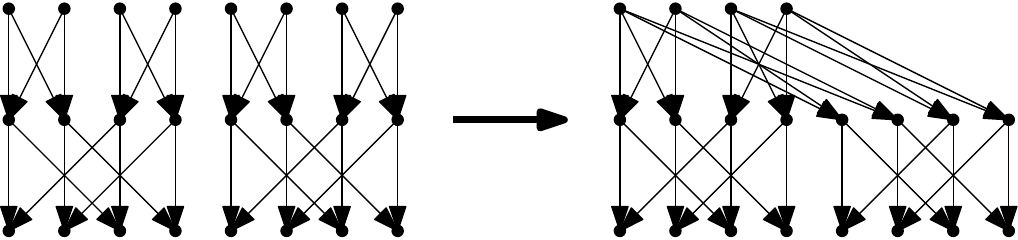}
  \caption{From butterfly graph to unbalanced butterfly graph.}
  \label{fig:newgraph}
\end{figure}

If $\instcount = 1$, the unbalanced butterfly graph is just the butterfly graph. In \patrascu's paper a relationship between reachability in the butterfly graph and range counting problem is given~\cite[Section~2.1+Appendix~A]{patrascu8unifying}. Actually the proof could be extended to the unbalanced butter fly graphs with little modifications.
The basic idea of the proof is to map an edge $(h, i, a_1, \ldots, a_{i-1}, a_i, a_{i+1}, \ldots, a_D)\rightarrow(h, i+1, a_1, \ldots, a_{i-1}, a'_i, a_{i+1}, \ldots, a_D)$ to a rectangle on the grid. This is because all the vertices in layer $0$ that leads to this edge is are the vertices $(h, 0, \star, \ldots, \star, a_i, a_{i+1}, \ldots, a_D)$, and all the vertices in the last layer this edge leads to are the vertices $(h, D, a_1, \ldots, a'_i, \star, \ldots, \star)$, where $\star$ means arbitrary values. And a reachability query from layer $0$ to the last layer could be translated to a stabbing query, which will further be translated to a range counting query.
We state the following lemma with the proof leaving to the reader.

\begin{lemma} \label{lem:rcandreachability}
A range counting data structure on a $P\times Q$ grid could be used to solve
the reachability problem on unbalanced butterfly graphs with $P$ vertices in layer $0$ and $Q$ vertices in the last layer with same space and query time. The reachability problem is to answer queries if there is a directed path from vertex $u$ to vertex $v$ where $u$ is a vertex in layer $0$ and $v$ is a vertex in the last layer.
\end{lemma}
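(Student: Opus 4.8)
The statement to prove is Lemma~\ref{lem:rcandreachability}: a range counting data structure on a $P \times Q$ grid can simulate reachability on an unbalanced butterfly graph with $P$ layer-0 vertices and $Q$ last-layer vertices, with the same space and query time.

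\medskip

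The plan is to carry out the encoding sketched in the paragraph preceding the lemma, making it precise. First I would fix the unbalanced butterfly graph with parameters $\instcount, \blockcount, \blocksize, D$ and recall its structure: layer~$0$ has $\frac{\blockcount}{D}$ vertices, each labeled $(0, a_1, \ldots, a_D)$ with the $h$-coordinate merged away, and the last layer $D$ has $\frac{\instcount \blockcount}{D}$ vertices labeled $(h, D, a_1, \ldots, a_D)$. A directed path from a layer-$0$ vertex $u$ to a last-layer vertex $v$ exists iff, for each ``slice'' $h$ through which the path could travel, the sequence of single-coordinate rewrites is consistent; concretely, reachability from $u = (0, a_1, \ldots, a_D)$ to $v = (h, D, a'_1, \ldots, a'_D)$ holds iff there is a choice of $h$ and the path through layers $1, \ldots, D$ exists, which amounts to: every edge used lies on some realizable path. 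The key combinatorial fact (from \patrascu~\cite[Section 2.1 + Appendix A]{patrascu8unifying}) is that $u$ reaches $v$ iff at least one edge on the ``natural'' path between them is present, so reachability reduces to a \emph{stabbing}/\emph{existence} query over a set of edge-rectangles, which in turn is a dominance counting query modulo~$2$ after an inclusion-exclusion layout.

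\medskip

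Concretely, I would define the point set for the range counting instance as follows: place each present edge of the butterfly graph as a point (or a small constant number of points implementing a rectangle via its four corners) on the $P \times Q$ grid. The $x$-coordinate encodes the ``ancestor set in layer $0$'', namely all layer-$0$ vertices $(0, \star, \ldots, \star, a_i, \ldots, a_D)$ that can reach the tail of the edge at layer $i$; since these form a contiguous block of size $\blocksize^{i-1}$ when layer-$0$ labels are ordered lexicographically, this is an interval in $[P]$ with $P = \frac{\blockcount}{D}$ (note $\blocksize^D \ge \frac{\blockcount}{D}$ by the graph's defining inequality, so lex-order over $(a_1,\dots,a_D)$ does index $P$ values). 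Similarly the $y$-coordinate encodes the set of last-layer vertices $(h, D, a_1, \ldots, a'_i, \star, \ldots, \star)$ reachable from the head, an interval in $[Q]$ with $Q = \frac{\instcount \blockcount}{D}$ once we order last-layer labels by $(h, a_1, \ldots, a_D)$. An edge thus becomes an axis-aligned rectangle $[x_1,x_2] \times [y_1,y_2]$ in the grid; using the standard trick, placing $+1$ points at two corners and $-1$ (equivalently $+1$, since we work mod $2$) at the other two reduces ``is $(u,v)$ covered by an odd number of edge-rectangles'' — and by the path-uniqueness structure of the butterfly this parity equals the reachability bit — to a dominance count mod $2$, which is exactly the query in Definition~\ref{def:rangecounting}. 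The total number of points is $O(|E|) = O(\instcount \blockcount \blocksize)$; I would absorb this into ``$n$'' for the range counting bound, and note the query time and space are preserved verbatim since a single reachability query maps to a single range query.

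\medskip

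The main obstacle — and the step I would spend the most care on — is verifying that the parity of the number of covering edge-rectangles exactly equals the reachability bit, rather than merely being $\ge 1$ when reachable. This is where the special structure of the butterfly graph is essential: between any fixed layer-0 vertex $u$ and last-layer vertex $v$ there is at most one directed path, so $u$ reaches $v$ iff \emph{all} $D$ edges of that unique path are present, and I must encode the conjunction (``all $D$ present''), not a disjunction, via the grid. \patrascu\ handles this by a more refined layout (essentially encoding, per edge, enough coordinate information that only a consistent full path contributes an odd count); I would follow that layout, check that the unbalancing (the merge of the $h$-coordinate in layer~$0$, and the resulting asymmetry $P \ne Q$) does not break the interval structure of either coordinate, and confirm the inequality $\log_{\blocksize}(\blockcount/D) \ge D$ is exactly what guarantees $\blocksize^D \ge \blockcount/D = P$ so that the lexicographic indexing of layer-0 labels is well-defined. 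Everything else is bookkeeping, and since the lemma statement explicitly says ``with the proof leaving to the reader,'' a careful account of these three points (interval structure of $x$ and $y$, parity-equals-reachability via path uniqueness, and the role of the defining inequality) is what a complete proof requires.
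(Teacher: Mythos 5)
Your proof follows the same approach as the paper, which itself only gives a sketch (``We state the following lemma with the proof leaving to the reader'') and defers to \patrascu's Section~2.1 and Appendix~A. Like the paper, you map each edge to a rectangle whose $x$-projection is the set of layer-$0$ ancestors and whose $y$-projection is the set of last-layer descendants, observe that these are lexicographic intervals, convert a reachability query to a stabbing query, and convert stabbing to dominance counting via the four-corner inclusion-exclusion trick. Two slips in the write-up, both of which you would have to fix to turn this into a complete proof:

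First, your ``key combinatorial fact'' is stated in the wrong direction. You write that $u$ reaches $v$ iff \emph{at least one} edge on the unique $u$-to-$v$ path is present; in a butterfly, since the path is unique, $u$ reaches $v$ iff \emph{all} $D$ edges of that path are present, equivalently iff \emph{zero} absent-edge rectangles stab $(u,v)$. You correct this later in the paragraph on parity, but as stated in the earlier passage it would lead to the wrong reduction (a disjunction rather than a conjunction). This also affects whether you should be placing present-edge rectangles or absent-edge rectangles; the natural choice, and the one that aligns with the disjointness structure of the underlying LSD reduction, is to place a rectangle per \emph{absent} edge and test whether the stabbing count is $0$.

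Second, you state that the condition $\log_{\blocksize}(\blockcount/D) \ge D$ ``guarantees $\blocksize^D \ge \blockcount/D$''; in fact it gives the opposite inequality, $\blockcount/D \ge \blocksize^D$. What the reduction actually needs is simply that the layer-$0$ vertices and the last-layer vertices each admit a total order under which every edge's ancestor set and descendant set is an interval (lexicographic order on the $(a_1,\dots,a_D)$ part, least-significant-first, does this), and the paper's inequality is there to keep the label space small enough for the parameter bookkeeping. Finally, the remaining load-bearing step --- that parity of the stabbing count equals the reachability bit --- is, as you note, inherited from \patrascu's construction, where the hard LSD distribution ensures the intersection size (hence the number of absent edges on any queried path) is $0$ or $1$; the paper relies on the same citation, so your deferral there matches the paper's.
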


By choosing $n=\instcount \blocksize$, $\instcount = n^{1-\eps}$ and $D = \Theta(\log n/\log \log n)$, we can see that the unbalanced butterfly graph has $\frac{\blocksize}{D} = \tilde\Theta(n^{\eps})$ vertices in layer $0$ and $\tilde\Theta(n)$ vertices in the last layer. Thus a lower bound for reachability on this graph will implies lower bound for range counting on the $[n] \times [n^\eps]$ grid.

\begin{lemma}[\cite{patrascu8unifying}]
For the reachability problem on the butterfly graphs with $\log_{\blocksize} \frac{\blockcount}{D} \geq D$, there exists some constant $c$ such that when $\blocksize \geq \max\{\left(\frac{e\dspace D}{n}\right)^c, w^c\}$, the query time $t = \Omega(D)$.
\end{lemma}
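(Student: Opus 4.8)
The plan is to lift \patrascu's butterfly-reachability lower bound from the balanced case ($\instcount=1$) to the unbalanced case by the standard direct-sum/information-cost argument: the unbalanced graph is, up to the identification of layer-$0$ vertices, just $\instcount$ disjoint copies of a single balanced butterfly on $\blockcount/D$ layer-$0$ vertices, and a reachability query lives entirely inside one copy. First I would recall the reduction from butterfly reachability to a communication game, exactly as in \cite{patrascu8unifying}: Bob holds the set of ``present'' edges of the graph and Alice holds the query pair $(u,v)$ with $u$ in layer $0$ and $v$ in the last layer; a data structure with space $\dspace$, word size $w$ and query time $t$ yields a protocol in which Alice sends $t\log\dspace$ bits and Bob sends $tw$ bits. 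Then I would invoke the hard distribution used there on each of the $\instcount$ copies independently, so that the composed instance is a product distribution over $\instcount$ independent balanced sub-instances.

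The core step is the direct-sum bound. On a single balanced butterfly with the parameters forced by $\log_\blocksize(\blockcount/D)\ge D$, \patrascu\ shows (via his round-elimination / richness-style argument for LSD-type problems) that any correct protocol must have either Alice's communication $\Omega((\blockcount/D)\log\blocksize)$ in the right normalization, or Bob's communication at least a $\blocksize^{1-\eps}$-type term; equivalently, in data-structure terms, $t=\Omega(D)$ once $\blocksize\ge\max\{(e\dspace D/n)^c,w^c\}$ for the relevant constant $c$. For the unbalanced graph I would argue that a protocol solving a uniformly random one of the $\instcount$ copies must, by an averaging/information-cost argument over the independent copies, incur on a typical copy at least a $1/\instcount$ fraction of the total communication budget; but the per-copy threshold on $\blocksize$ only involves $\dspace$ through the term $(e\dspace D/n)^c$, and here $n=\instcount\blocksize$, so the ``effective'' space per copy is $\dspace$ while the universe per copy shrinks by the factor $\instcount$ — this is exactly why the exponent on $\blocksize$ in the hypothesis is stated in terms of $\dspace D/n$ rather than $\dspace D\instcount/n$. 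Tracking this bookkeeping carefully gives that the total query time is still $\Omega(D)$, since $D$ does not depend on $\instcount$; the $\instcount$ copies only enlarge $|V|,|E|$ and hence weaken the space-per-universe ratio, which is already absorbed into the stated threshold on $\blocksize$.

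Concretely I would carry it out as: (i) fix the graph parameters $\instcount,\blockcount,\blocksize,D$ with $\log_\blocksize(\blockcount/D)\ge D$ and recall that the number of distinct labels $\instcount(D+1)\blocksize^D\le|V|$, so the encoding is consistent; (ii) state the communication game and the cell-probe-to-communication reduction giving Alice $t\log\dspace$ and Bob $tw$ bits; (iii) put the \patrascu\ hard distribution independently on each of the $\instcount$ copies and observe the query touches one copy; (iv) apply the single-copy lower bound of \cite{patrascu8unifying} together with a direct-sum/averaging step to conclude $t\log\dspace=\Omega(D\cdot\text{(right scale)})$ or $tw=\Omega(\blocksize^{1-\eps}\cdot\text{(right scale)})$, and simplify using $\blocksize\ge\max\{(e\dspace D/n)^c,w^c\}$ to kill both branches unless $t=\Omega(D)$. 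The main obstacle I expect is step (iv): making the direct-sum argument go through with the correct accounting of how $\dspace$ enters per copy — i.e.\ verifying that \patrascu's single-instance argument is ``internal'' enough (it is a richness/round-elimination argument on a product structure) that it composes cleanly across the $\instcount$ independent copies without losing more than the constant factors already hidden in the $\Omega$, and that the only place $\instcount$ shows up is in the relation $n=\instcount\blocksize$ that defines the threshold on $\blocksize$. The geometry of step (i) (the edge $\to$ rectangle map and the fact that reachability becomes a dominance-stabbing query, hence a range-counting query on $[P]\times[Q]$ with $P=\blockcount/D$, $Q=\instcount\blockcount/D$) is already handled by Lemma~\ref{lem:rcandreachability}, so I would simply cite it.
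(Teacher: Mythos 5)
The paper does not actually prove this lemma --- it is imported verbatim from \cite{patrascu8unifying}, so there is no ``paper's own proof'' to compare against; the only thing to assess is whether your re-derivation is a valid route. It is not, for two reasons.

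First, a terminological slip that matters: in this paper's Definition~\ref{def:butterfly}, the \emph{butterfly graph} is the graph that is a disjoint union of $\instcount$ independent depth-$D$, degree-$\blocksize$ butterflies (edges only join vertices with the same $h$-coordinate), while the \emph{unbalanced} butterfly graph is the one obtained by merging the layer-$0$ vertices across different $h$. The lemma you are proving concerns the former. Your proposal repeatedly calls the multi-copy disjoint-union object ``the unbalanced graph,'' but the merging step is a separate, later reduction in the paper and is irrelevant here.

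Second, and more seriously, the direct-sum step does not go through, and the shape of the bound tells you that it cannot be the intended route. The cell-probe-to-communication reduction produces a protocol for a \emph{single} query --- Alice sends $t\log\dspace$ bits, Bob sends $tw$ bits --- and that one query lives entirely inside one of the $\instcount$ copies. There is no $\instcount$-fold communication budget to average over, so an ``each copy gets a $1/\instcount$ fraction'' information-cost argument has no object to act on. If instead you try the honest reduction --- restrict the data structure to a single copy and invoke \patrascu's single-butterfly bound --- you get the threshold $\blocksize\ge\bigl(e\dspace D\,\instcount/n\bigr)^c$ (since the single copy only has $n/\instcount$ edges), which is a \emph{strictly more demanding} hypothesis than the stated $\blocksize\ge(e\dspace D/n)^c$. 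So the naive ``one copy suffices'' shortcut loses exactly the factor $\instcount^c$, and a direct-sum over copies is the wrong tool to recover it.

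What actually makes the lemma a legitimate citation of \cite{patrascu8unifying} is that \patrascu's proof (Section~2.1 and Appendix~A of \cite{patrascu8unifying}) never isolates one query: it reduces the reachability oracle to \blsd\ by packing \emph{all} layer-$0$-to-last-layer query pairs of the whole graph into one \blsd\ instance, with one block per query and the universe being the edge set. The $\instcount$ disjoint copies simply contribute $\instcount$ times as many blocks to that single \blsd\ instance, while the \blsd\ communication lower bound (Lemma~\ref{thm:blsd_comm_lb}, which is \cite[Lemma~3.1]{patrascu8unifying}) is already parametric in the number of blocks. Re-running \patrascu's argument with those parameters yields the lemma as stated, with $n$ being the total edge count, and no new direct-sum theorem is required or available. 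Your proposal identifies the right ingredients (the reduction to communication, \patrascu's \blsd\ machinery, the role of $n=\instcount\blocksize$) but stitches them together with a step that, as written, does not exist.
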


And it is easy to observe that a reachability data structure for unbalanced butterfly graphs is also a reachability data structure for butterfly graphs.

\begin{lemma}
A reachability data structure for unbalanced butterfly graphs could be used to solve reachability problem for butterfly graph using the same space and query time.
\end{lemma}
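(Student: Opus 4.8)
The plan is to show that a reachability query on a butterfly graph can be simulated, at no cost in space and with a one-to-one probe simulation, by a reachability query on an unbalanced butterfly graph with the \emph{same} parameters $\instcount,\blockcount,\blocksize,D$. The point is that the only difference between the two graphs is the merging of the layer-$0$ vertices, and in the reachability problem the layer-$0$ vertices serve only as \emph{sources} of queries, never as interior vertices of a path; so the merge can change an answer only by conflating the $\instcount$ copies, which is trivial to undo at query time.

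Concretely, given a butterfly-graph instance $G$ on the vertex set labeled $(h,b,a_1,\ldots,a_D)$, I would form $\widetilde{G}$ by identifying, for each fixed $(a_1,\ldots,a_D)$, the $\instcount$ layer-$0$ vertices $(1,0,a_1,\ldots,a_D),\ldots,(\instcount,0,a_1,\ldots,a_D)$ into a single vertex $(0,a_1,\ldots,a_D)$ whose set of out-edges is the union of the out-edge sets of the vertices it came from; all vertices and edges on layers $b\ge 1$ are left untouched. Since the condition $\log_{\blocksize}(\blockcount/D)\ge D$ is the same in both definitions, $\widetilde{G}$ is a legal unbalanced butterfly graph with the same $\instcount,\blockcount,\blocksize,D$, and I would build the unbalanced reachability data structure on $\widetilde{G}$, using precisely its space. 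To answer a butterfly query ``is $v=(h',D,a'_1,\ldots,a'_D)$ reachable from $u=(h,0,a_1,\ldots,a_D)$?'', I would first test whether $h=h'$: if not, I answer ``no'', because no edge of a butterfly graph changes the copy index $h$, so distinct copies are disconnected; if $h=h'$, I forward to the unbalanced data structure the query ``is $v$ reachable from the merged vertex $(0,a_1,\ldots,a_D)$?''. These answers agree because (i) any $u$-to-$v$ path in $G$ becomes a path in $\widetilde{G}$ by merging its initial vertex, and (ii) conversely any path in $\widetilde{G}$ from $(0,a_1,\ldots,a_D)$ to $v$ leaves that vertex along an edge to some layer-$1$ vertex $w$; since layers $\ge 1$ are unmerged and every edge preserves $h$, the suffix $w\to\cdots\to v$ lies entirely in one copy, namely the copy $h'=h$ of $v$, so the first edge was inherited from the copy-$h$ vertex $(h,0,a_1,\ldots,a_D)=u$, and hence $u\to w\to\cdots\to v$ is a path in $G$.

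The one point requiring care — really bookkeeping rather than a genuine obstacle — is the ``union of out-edges'' semantics of the merge: a path out of $(0,a_1,\ldots,a_D)$ could leave along an edge that originally belonged to some copy $h''\ne h$, and one must rule out that such a path reaches a layer-$D$ vertex in copy $h$; the invariant that every butterfly edge keeps $h$ fixed on every layer is exactly what does this. If the reachability model used by \patrascu\ restricts out-degrees (the functional-routing version), the construction is unchanged except that the merged vertex keeps the copy-specific out-edges it actually needs, and the rest of the argument carries over verbatim. Composing this reduction with the lower bound for butterfly-graph reachability and with Lemma~\ref{lem:rcandreachability} is then what transfers the bound to range counting on the $[n]\times[n^\eps]$ grid.
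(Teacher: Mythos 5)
Your proof is correct and follows essentially the same route as the paper: build the unbalanced butterfly by identifying layer-$0$ vertices across the $\instcount$ copies, and answer a butterfly query by forwarding it to the merged layer-$0$ source with the same $(a_1,\ldots,a_D)$ label. You simply spell out the bookkeeping that the paper leaves implicit in its one-line ``almost trivial'' proof — in particular the explicit ``if $h\ne h'$ answer no'' case and the observation that the suffix of any path in $\widetilde G$ after leaving layer $0$ is confined to a single copy, so the first edge must have come from the queried copy.
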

\begin{proof}
This is almost trivial to prove. For a query on vertices $(h_1, 0, a_1, \ldots, a_D)$ and $(h_2,D,a_1, \ldots, a_D)$ in the butterfly graph, it will just map to the vertices with the same label on the corresponding unbalanced butterfly graph.
\end{proof}

Finally, we have the proof for Lemma~\ref{lem:2dlower}.

\begin{proof}[Lemma~\ref{lem:2dlower}]
We know that $w = O(\log n)$ and $\dspace \leq n \log^d n$ for some constant $d$. And since $D \leq \log_{\blocksize} \frac{\blockcount}{D} \leq \log n$, we can choose $B = \log^{(d+1)c+1} n$ to make $B \geq (e\log^d n \log n)^c \geq \left(\frac{e\dspace D}{n}\right)^c$ and $B \geq \log^{c+1} n \geq w^c$. It implies that $t = \Omega(D)$. We can choose $D = \frac{1}{2(d+1)c+2} \cdot \frac{\log n}{\log \log n} \leq \log_{\blocksize} \frac{\blockcount}{D}$. Thus in the best case we have $t = \Omega(\log n/\log \log n)$.
\end{proof}

\section{Remaining Proofs for BWT-based Compressions}
Here we bound the number of ``runs'' in $s_{\setb}$ (see Lemma~\ref{lem:reduction_from_blsd} for definition). We are going to show that for the string derived by a variant of the string $s_{\setb}$, BWT-based compression schemes could compress it well. As a result, by using a similar argument of Lemma~\ref{lem:reduction_from_blsd}, randomly accessing a bit in BWT compressed strings is also hard.

An important observation for string $s_{\setb}$ is that it could also be derived by applying the following \emph{replacement rules} to the string ``1'' sequentially for $i=1,2, \ldots, \blockcount$.

\begin{itemize}
  \item $0 \rightarrow 0^{\blocksize}$;
  \item $1 \rightarrow h_i$;
\end{itemize}

where
$$ h_i = \big(\blocksize (i-1) \not\in \setb \big) \ \big(\blocksize (i-1)+1 \not\in \setb \big) \cdots \big(\blocksize(i-1)+\blocksize-1 \not\in \setb \big) $$
is a binary string. The reader could simply check that it defines the same string as $s_{\setb}$ in Lemma~\ref{lem:reduction_from_blsd}.

Let $s_0 = 1$ and $s_i$ to be the binary string obtained by applying the replacement rules for $i$ to $s_{i-1}$. Here replacement rules means that we simply replace every $0$ in $s_{i-1}$ with $0^b$ and every $1$ with $h_{i}$. And we note that $s_\blockcount = s_\setb$. And we have the following lemma for a variant of $s_\setb$.

\begin{lemma} \label{lem:bwtmain}
For every set $\setb \subseteq [\blocksize \blockcount]$, let $s_0 = 1$ and $n = \blocksize \blockcount$, for $i=1,2, \ldots, \strlen$, we apply the following replacement rule on $s_{i-1}$ to get $s_i$,
\begin{itemize}
  \item $0 \rightarrow 0^{4\blocksize}$;
  \item $1 \rightarrow h'_i$;
\end{itemize}
where the length of $h'_i$ is $4\blocksize$ and $h'_i$ is $h_i$ with $0$ replaced with $1011$ and $1$ replaced by $1101$, e.g., if $h_i = 01$ then $h'_i = 10111101$.

We let $s'_{\setb} = s'_\blockcount$, then the following properties of $s'_\setb$ is true.
\begin{enumerate}
  \item $|s'_{\setb}| = (4\blocksize)^\blockcount = \strlen$;
  \item There is a function $\sigma$ independent of $\setb$ such that for every blocked set $\seta \subseteq [\blocksize \blockcount]$, $s'_{\setb}[\sigma(\seta)] = (\seta \cap \setb = \emptyset)$;
  \item $\runs(s'_{\setb}) = O(n) = O(\blocksize \blockcount)$.
\end{enumerate}
\end{lemma}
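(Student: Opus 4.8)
The three assertions are of rather different character, so I would treat them in turn and spend essentially all of the effort on the last one.

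\emph{Item 1 (length).} This is immediate. Each replacement rule sends a string of length $\ell$ to one of length $4\blocksize\,\ell$: a $0$ is blown up to $0^{4\blocksize}$, and a $1$ to $h'_i$, which has length $4|h_i|=4\blocksize$. Hence $|s_i|=(4\blocksize)^i$ and $|s'_{\setb}|=|s_{\blockcount}|=(4\blocksize)^{\blockcount}=\strlen$.

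\emph{Item 2 (decoding).} I would copy the block-by-block induction of Lemma~\ref{lem:reduction_from_blsd}, the only new ingredient being that each $4$-bit gadget carries one bit in a fixed slot: $1011$ and $1101$ agree on coordinates $0$ and $3$ (both equal $1$), and at coordinate $1$ the gadget that encodes the bit $c$ has value exactly $c$. So, identifying a blocked query set $\seta$ with $(v_1,\dots,v_{\blockcount})$, $v_k\in\{0,\dots,\blocksize-1\}$ (which element of block $k$ it contains), set
\[
  \sigma(\seta)\;=\;\sum_{k=1}^{\blockcount}\bigl(4v_k+1\bigr)\,(4\blocksize)^{\blockcount-k},
\]
which depends only on $\seta$. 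One then proves by induction on $i$ that the symbol of $s_i$ at the position reached by descending, at each round $k\le i$, into offset $4v_k+1$ of the current length-$4\blocksize$ block equals $\prod_{k\le i}h_k[v_k]$: if the current symbol is a $1$, it was replaced by $h'_{i+1}$, whose entry at offset $4v_{i+1}+1$ is exactly $h_{i+1}[v_{i+1}]$; if it is a $0$, it was replaced by $0^{4\blocksize}$ and contributes a $0$ (consistently, since then the product is already $0$). Taking $i=\blockcount$ gives $s'_{\setb}[\sigma(\seta)]=\prod_k h_k[v_k]=[\,\seta\cap\setb=\emptyset\,]$, exactly as in Lemma~\ref{lem:reduction_from_blsd}.

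\emph{Item 3 (few runs) — the crux.} This is the whole reason for replacing the plain bits of $h_i$ by the gadgets $1011,1101$, and it is where I expect all the difficulty. The structural invariant I would push through the $\blockcount$ rounds is: each $h'_i$ begins with $1$, ends with $1$, and contains no $00$ (its $0$'s are isolated). From this I would set up a recursion for $\runs(s_i)$ in terms of $\runs(s_{i-1})$: a maximal $0$-run of $s_{i-1}$ is sent to a single $0$-run of $s_i$; a maximal $1$-run of length $\ell$ is sent to $(h'_i)^{\ell}$, whose consecutive copies fuse at their shared $1$-endpoints and whose internal $0$'s remain isolated; and the junction between the image of a $1$-run and that of an adjacent $0$-run never merges (one side ends/starts in $1$, the other is all $0$). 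The target is to show that this accounting contributes only $O(\blocksize)$ new runs per round, so that $\runs(s'_{\setb})=O(\blocksize\blockcount)=O(n)$. Turning the ``$h'_i$ begins/ends with $1$ and has isolated $0$'s'' invariant into an \emph{additive} (rather than multiplicative) per-round bound is the delicate point, and it is exactly here that the precise shape of the gadgets has to be exploited; if a direct count of $\runs(s'_{\setb})$ resists, the fallback — which is all that the application via Lemma~\ref{cor:bwtruns} actually needs — is to bound instead the number of equal-letter runs of $\bwt(s'_{\setb})$, using the strong self-similarity of $s'_{\setb}$ (it is the image of ``$1$'' under an explicit sequence of morphisms) to control the sorted list of its rotations. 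Items 1 and 2 are bookkeeping that essentially reuses the reductions already in the paper; Item 3 is the real content and the main obstacle.
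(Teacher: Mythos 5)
Items~1 and~2 of your proposal are fine and reproduce the paper's arguments (the paper writes $\sigma(\seta)$ in a slightly different indexing convention, $(4a_1+2,\dots,4a_{\blockcount}+2)_{4\blocksize}$, but this is the same observation: one coordinate of the 4-bit gadget stores the encoded bit, the other three are fixed). The trouble is entirely with Item~3, and it is not just the ``delicate point'' you flag at the end --- your primary accounting is aimed at a quantity that is the wrong one and for which the claimed $O(n)$ bound is in fact false.

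First, $\runs(\cdot)$ in this lemma and in Lemma~\ref{cor:bwtruns} is the number of equal-letter runs of $\bwt(s)$, not of $s$ itself. That is the quantity that governs the size of the BWT-based encoding (store the \texttt{\$} position, then run-length encode $\bwt(s)$), and it is the quantity the paper's proof tracks. Second, your run-accounting for the raw string, while correct as far as it goes, exposes exactly why it cannot close: a maximal $1$-run of length $\ell$ in $s'_{i-1}$ maps to $(h'_i)^\ell$, which has about $2\blocksize\ell+1$ runs. The factor $\blocksize$ is \emph{multiplicative}, so the number of runs of $s'_i$ grows by a factor $\Theta(\blocksize)$ per round. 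Concretely, $h'_i$ has exactly $\blocksize$ isolated $0$'s and $3\blocksize$ ones, and a direct count gives that the number of $0$-runs of $s'_i$ is $\sum_{j=1}^{i}3^{j-1}\blocksize^{j}=\Theta((3\blocksize)^{i})$. So $\runs$-of-the-raw-string is $\Theta((3\blocksize)^{\blockcount})$, exponential in $\blockcount$, nowhere near $O(n)=O(\blocksize\blockcount)$. No amount of exploiting the gadget shapes turns this into an additive bound on the raw string; you must bound the BWT runs directly.

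Your fallback --- control the runs of $\bwt(s'_{\setb})$ via the morphism structure --- is indeed the right move, and is what the paper carries out. The key steps are: (i) the rotations of $s'_i$ whose starting index is a multiple of $4\blocksize$ sort in exactly the same order as the rotations of $s'_{i-1}$ (a $0^{4\blocksize}$-segment is lexicographically smaller than an $h'_i$-segment, which starts with $1$), and the last character of such a rotation equals the corresponding character of $s'_{i-1}$, so this sub-sorted-list already contributes exactly $\runs(\bwt(s'_{i-1}))$ runs; (ii) the remaining $(4\blocksize)^{i}-(4\blocksize)^{i-1}$ rotations are split by how many leading $0$'s they have: those with at most $4\blocksize$ leading zeros fall into $O(\blocksize)$ classes determined by their first $5\cdot4\blocksize$ characters, with all rotations in a class having the same last character (paper's Lemmas~\ref{lem:leq4b} and~\ref{lem:20beq}); those with more than $4\blocksize$ leading zeros all end in $0$ and are inserted next to rotations already ending in $0$ (paper's Lemma~\ref{lem:notallzero}). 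Together this gives $\runs(\bwt(s'_i))\le\runs(\bwt(s'_{i-1}))+O(\blocksize)$ and hence $O(\blocksize\blockcount)$ by induction. In short: your sketch for Item~3 needs to be replaced wholesale by the BWT-rotation argument; bounding the raw-string runs is both the wrong target and a false statement.
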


First assuming this lemma is true, we have the proof for Theorem~\ref{thm:bwtmain}.

\begin{proof}[Theorem~\ref{thm:bwtmain}]
From Lemma~\ref{lem:bwtmain} and Lemma~\ref{cor:bwtruns} we know that one can store $s'_{\setb}$ in a string with compressed size at most $m = n\log \strlen$ bits. Pick $\delta = \frac{100\eps}{1-\eps}$, it is easy to verify the $t \geq m^{1/3-\delta}$ lower bound. All the rest are similar to the proof of Theorem~\ref{thm:grammar_lb_from_blsd_det}.
\end{proof}

Now we prove Lemma~\ref{lem:bwtmain}.

\begin{proof}[Lemma~\ref{lem:bwtmain}]
First of all, it is easy to make the following observations.
\begin{enumerate}
  \item Each $h'_i$ starts with $1$ and ends with $1$.
  \item The number of consecutive $0$'s in $s'_i$ is $(4\blocksize)^j$.
  \item The number of consecutive $1$'s in $s'_i$ is at most $4$.
  \item The length of $s'_i$ is $(4\blocksize)^i$.
  \item If $s_i^{'4r\blocksize+\alpha}$ starts with $\gamma$ $0$'s and $\gamma$ is not a multiple of $4\blocksize$ or 1, then $s_i^{'4r\blocksize+\alpha}$ ends with $0$.
  \item If we represent a blocked set $\seta$ in integer as $(a_1, a_2, \ldots, a_m)_\blocksize$ as an base $\blocksize$ integer, then we know that for the integer $\sigma(\seta) = (4a_1+2, 4a_2+2, \ldots, 4a_m+2)_{4\blocksize}$ in base $4\blocksize$, we have $s_{\setb}[\seta] = s'_{\setb}[\sigma(\seta)]$. So this is the $\sigma$ we want.
\end{enumerate}
Second, we are going to show that $\runs(s'_i) \leq \runs(s'_{i-1}) + 512\blocksize$. If this is true, then $\runs(s'_\blockcount) \leq \runs(s'_{\blockcount-1}) + 512\blocksize \leq \cdots \leq \runs(s'_0) + 512\blocksize\blockcount = 512\blocksize\blockcount$, which is the third item we need to prove.

The way of upper bounding $\runs(s'_i)$ by $\runs(s'_{i-1})$ is to see the process of computing BWT of $s'_{i}$ as first computing the BWT of $s'_{i-1}$ then inserting the rest bits. Precisely speaking, if we group the bits in $s'_i$ into segments of length $4\blocksize$ in $s'_i$ from start, we know each segment is derived by a bit in $s'_{i-1}$. A simple observation  is that by looking at the start of each segment ($s_i^{'0}, \ldots, s_i^{'(4\blocksize)^\blockcount}$), the last bit of the sorted list is the same as the BWT of $s'_{i-1}$. So there are $\runs(s'_{i-1})$ runs in the string formed by last bits of the sorted list.

So all the $\runs(s'_i) - \runs(s'_{i-1})$ runs come from the other parts of $s'_i$, say $s_i^{'4r\blocksize+\alpha}$ for $\alpha\in[4\blocksize], r\in[(4\blocksize)^{i-1}], \alpha\in [4\blocksize-1]$. We discuss about them in two cases.

\paragraph{Case I.} For $s_i^{'4r\blocksize+\alpha}$ starts with $\leq 4\blocksize$ $0$'s. We know that for any $p \in [(4\blocksize)^{i-1}]$, $s_i^{'4p\blocksize+\alpha}$ ends in the same bit as $s_i^{'4r\blocksize+\alpha}$. If we look at another string $s_i^{'4qb+\beta}$ where $\beta \neq \alpha$, then by the following lemma we know that the first $5\cdot 4\blocksize$ of them differs.

\begin{lemma} \label{lem:leq4b}
The first $5\cdot 4\blocksize$ bits in $s_i^{'4p\blocksize+\alpha}$ and $s_i^{'4q\blocksize+\beta}$ ($p, q \in [(4\blocksize)^{i-1}], \alpha, \beta \in [4\blocksize-1], \alpha \neq \beta$) are always different for different $\alpha \neq \beta$ if they do not start with more than $4\blocksize$ $0$'s.
\end{lemma}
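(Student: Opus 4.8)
The plan is to analyze the structure of the rotations $s_i^{'k}$ carefully, exploiting the rigid block structure that the replacement rules impose. Recall that $s_i'$ is obtained from $s_{i-1}'$ by expanding each bit into a length-$4\blocksize$ segment: a $0$ becomes $0^{4\blocksize}$ and a $1$ becomes $h_i'$, which is a concatenation of $\blocksize$ gadgets each equal to $1011$ or $1101$. So $s_i'$ is naturally partitioned into aligned segments of length $4\blocksize$, where each segment is either all-zero or is $h_i'$. A rotation $s_i^{'k}$ with $k = 4p\blocksize + \alpha$ for $\alpha \in [4\blocksize-1]$ starts in the ``middle'' of a segment, at offset $\alpha$ inside it; the prefix we read off consists of the tail of that segment, followed by whole segments, followed eventually by the start of the wrap-around. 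The hypothesis ``does not start with more than $4\blocksize$ zeros'' is exactly what rules out the degenerate case where the current segment and the next are both all-zero (in which case the first $4\blocksize$ bits, or more, are a featureless run of zeros and carry no offset information).

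The key steps, in order, would be: (1) Observe that within the first $4\blocksize$ bits of $s_i^{'4p\blocksize+\alpha}$, we see the suffix of length $4\blocksize - \alpha$ of some segment, then the prefix of length $\alpha$ of the next segment. (2) Argue that from a window of length $5 \cdot 4\blocksize$ — i.e. looking at roughly five consecutive segments' worth of bits — one can unambiguously recover the offset $\alpha \bmod 4$ and indeed $\alpha$ itself, because the $1011/1101$ gadgets provide a ``self-synchronizing'' pattern: the positions of the isolated $0$'s inside a run of gadgets pin down where the length-$4$ gadget boundaries lie, hence pin down $\alpha \bmod 4$, and then the segment boundaries (every $\blocksize$ gadgets) combined with the location of the next all-zero segment pin down $\alpha$ completely. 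The case hypothesis guarantees that at least one non-zero segment appears among the first few, so this synchronization information is actually present in the window. (3) Conclude that if $\alpha \neq \beta$, the first $5 \cdot 4\blocksize$ bits must differ, since they would decode to different offsets. One would handle separately the sub-case where the current segment is all-zero (so the prefix is $0^{4\blocksize - \alpha}$ followed by either $h_i'$ or another zero block): here $\leq 4\blocksize$ leading zeros forces the next block to be $h_i'$, and the number of leading zeros is exactly $4\blocksize - \alpha$, which already determines $\alpha$.

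The main obstacle I expect is step (2): making precise the claim that a bounded window of the gadget sequence determines the offset, and in particular verifying that $1011$ and $1101$ are ``compatible'' enough that no cyclic shift of a gadget concatenation coincides with an unshifted one over a window of length $5 \cdot 4\blocksize$. One has to check that $1011$ and $1101$ share no nontrivial period and that concatenations like $\ldots 1011\,1101 \ldots$ cannot be read as a shifted copy of $\ldots 1101\,1011 \ldots$, etc. This is a finite case analysis on short binary strings, essentially a statement that the gadget alphabet $\{1011, 1101\}$ is a comma-free-ish code with good synchronization, and the constant $5$ is chosen generously so that the boundaries of blocks (multiples of $4\blocksize$) and the next zero-run also become visible. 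Once that combinatorial fact is in hand, the rest of the lemma — and its use in Case~I to bound $\runs(s_i') - \runs(s_{i-1}')$ — follows by bookkeeping: rotations with distinct offsets land in distinct ``buckets'' in the sorted order, so the new runs introduced at level $i$ are localized to $O(\blocksize)$ many offset classes, each contributing $O(1)$ extra runs, hence the $O(\blocksize)$ bound claimed in the surrounding argument.
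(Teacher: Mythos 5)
Your proposal takes a fundamentally different route from the paper, and the route as stated does not go through. The paper's proof is a short contradiction argument: suppose two rotations with $\alpha\neq\beta$ agree on the first $5\cdot 4\blocksize$ bits and neither starts with more than $4\blocksize$ zeros. Because $\alpha\neq\beta$, the two copies of $s'_i$ are misaligned at the $4\blocksize$-segment level. The hypothesis forces the early segments of both to be $h'_i$ (not $0^{4\blocksize}$); and since $h'_i$ begins and ends with a~$1$, an $h'_i$ segment in one string places a~$1$ strictly inside the overlapping segment of the other, forcing that one to be $h'_i$ as well. This cascades across the whole window, so a long run of consecutive segments in both copies must be $h'_i$ — but an $h'_i$ segment corresponds to a $1$ in $s'_{i-1}$, and $s'_{i-1}$ has at most $4$ consecutive $1$'s (the paper's Observation~3). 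That is the contradiction. The crucial ingredient is this \emph{global} bound on consecutive $1$'s in $s'_{i-1}$; the choice of gadgets $1011,1101$ plays essentially no synchronizing role in this particular lemma.

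Your approach instead tries to show the window \emph{positively decodes} $\alpha$, by claiming $\{1011,1101\}$ is a comma-free-ish, self-synchronizing code. This claim is false as stated: $1011\,1011\,1011\cdots$ read with an offset of $3$ is $1101\,1101\,1101\cdots$, both perfectly valid gadget concatenations, so the positions of the isolated $0$'s do \emph{not} pin down the gadget boundaries or $\alpha\bmod 4$ — there is a genuine $\pm 1$ ambiguity whenever the gadgets in view are all of one type. Your fallback anchor, ``the location of the next all-zero segment,'' is the right instinct, but you justify its presence with the wrong hypothesis: the ``$\le 4\blocksize$ leading zeros'' condition guarantees a \emph{non-zero} segment near the start, not that a zero segment is visible in the window. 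What actually guarantees a visible zero segment (or, equivalently, breaks the periodicity) is precisely the at-most-$4$-consecutive-ones bound on $s'_{i-1}$, since $(h'_i)^4$ has length only $16\blocksize < 20\blocksize$. That bound never appears in your outline, and without it the ``finite case analysis on short binary strings'' you propose will not close, because the putative comma-free property you plan to verify does not hold. So there is a real gap: the missing idea is the global constraint on $\runs$ of $1$'s in the previous level, and the paper uses it via a contradiction/cascade, not via decodability of the gadget code.
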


And by the following lemma, we know the last character of $s_i^{'4r\blocksize+\alpha}$ only depends on the first $5\cdot 4\blocksize$ bits.

\begin{lemma} \label{lem:20beq}
If the first $5\cdot 4\blocksize$ characters from $s^{'4p\blocksize+\alpha}_i$ and $s^{'4q\blocksize+\beta}_i$ are the same, and they are not start with more than $4\blocksize$ 0's, then they end with the same character.
\end{lemma}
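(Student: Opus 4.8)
The plan is to push all the real work onto Lemma~\ref{lem:leq4b} and onto one elementary observation about the fixed string $h'_i$, after which the conclusion is immediate. First I would invoke Lemma~\ref{lem:leq4b}: the rotations $s^{'4p\blocksize+\alpha}_i$ and $s^{'4q\blocksize+\beta}_i$ agree on their first $5\cdot 4\blocksize$ characters and neither starts with more than $4\blocksize$ zeros, so that lemma forces $\alpha=\beta$. Thus both rotations start at the \emph{same} offset $\alpha\in[4\blocksize-1]$ inside a length-$4\blocksize$ block of $s'_i$; let $S_p$ be the length-$4\blocksize$ block of $s'_i$ containing position $4p\blocksize+\alpha$, and $S_q$ the block containing $4q\blocksize+\beta$. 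Recall from the construction of $s'_i$ from $s'_{i-1}$ that \emph{every} length-$4\blocksize$ block of $s'_i$ is either $0^{4\blocksize}$ (the image of a `$0$') or the \emph{single} fixed string $h'_i$ (the image of a `$1$'), and that the last character of $s^{'4p\blocksize+\alpha}_i$ is exactly the $\alpha$-th character of $S_p$ (likewise for the other rotation). Hence it suffices to prove that $S_p$ and $S_q$ are blocks of the same kind: if both equal $0^{4\blocksize}$ their $\alpha$-th characters are both $0$, and if both equal $h'_i$ their $\alpha$-th characters are both $h'_i[\alpha]$.

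Next I would read off the kind of $S_p$ from the hypothesis. Because $\alpha<4\blocksize$, the first $4\blocksize-\alpha$ characters of $s^{'4p\blocksize+\alpha}_i$ are exactly the last $4\blocksize-\alpha$ characters of $S_p$ (this window stays within $s'_i$ and does not reach the sentinel `\$'), and similarly for $S_q$; so the equality of the first $5\cdot 4\blocksize$ characters gives in particular that $S_p$ and $S_q$ share a suffix of length $4\blocksize-\alpha\ge 1$. The elementary observation is this: each of the $\blocksize$ four-bit groups composing $h'_i$ is $1011$ or $1101$, so $h'_i$ begins and ends with `$1$' and contains no run of two consecutive `$0$'s (two adjacent groups meet at `$11$'); therefore every nonempty suffix of $h'_i$ ends in `$1$' and in particular is not all-zeros, whereas every suffix of $0^{4\blocksize}$ is all-zeros. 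Consequently the common suffix of $S_p$ and $S_q$ is all-zeros precisely when both of them equal $0^{4\blocksize}$, and otherwise both equal $h'_i$. In either case $S_p$ and $S_q$ are of the same kind, and by the previous paragraph $s'_i[4p\blocksize+\alpha]=s'_i[4q\blocksize+\beta]$, as desired.

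I do not foresee a real obstacle here: the substantive facts are Lemma~\ref{lem:leq4b}, which I am taking as given, and the two structural properties of $h'_i$ (first and last symbol `$1$', no double `$0$'), which are immediate from $h'_i$ being a concatenation of copies of $1011$ and $1101$. The only things that need to be done carefully are the index bookkeeping — checking that the last character of the rotation is indeed the $\alpha$-th character of $S_p$ and that its first $4\blocksize-\alpha$ characters are indeed the suffix of $S_p$ — and the boundary case in which $S_p$ is the very last block of $s'_i$; that case is harmless, since the last symbol of $s'_{i-1}$ is `$1$' and hence the last block of $s'_i$ is always an $h'_i$-block, and in any case we only inspect the first $4\blocksize-\alpha<4\blocksize$ characters of the rotation, which lie entirely within $S_p$ and therefore never cross `\$'. (The hypothesis that the rotations start with at most $4\blocksize$ zeros is used only to be in the regime covered by Lemma~\ref{lem:leq4b}; for rotations of this particular form it holds automatically.)
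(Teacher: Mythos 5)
Your proof is correct and follows the same route as the paper's: apply Lemma~\ref{lem:leq4b} to force $\alpha=\beta$, then argue that the two positions sit at the same offset $\alpha$ inside blocks of $s'_i$ that must be of the same type (both $0^{4\blocksize}$ or both $h'_i$), so their $\alpha$-th characters agree. Where the paper simply asserts that $s'_{i-1}[p+1]$ and $s'_{i-1}[q+1]$ ``must be the same,'' you supply the missing justification — the two blocks share a nonempty suffix, and such a suffix of $0^{4\blocksize}$ is all zeros while any nonempty suffix of $h'_i$ ends in `$1$' — which is exactly the argument the paper leaves implicit.
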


So we know that if we group these $s_i^{'4r\blocksize+\alpha}$ into groups according to the first $5 \cdot 4\blocksize$ bits, then there will be $\leq 64\cdot 4\blocksize$ groups and each group will end in the same bit. So the number of runs increased by inserting these $s_i^{'4r\blocksize+\alpha}$ is $\leq 2\cdot 64 \cdot 4\blocksize = 512\blocksize$.

\paragraph{Case II.} For $s_i^{'4r\blocksize+\alpha}$ starts with $> 4\blocksize$ $0$'s. Since $a\neq 0$, $s_i^{'4r\blocksize+\alpha}$ ends with $0$. And we know that after sorting all these $s_i^{'4r\blocksize+\alpha}$ starting with $> 4l'\blocksize$ $0$'s and $< 4(l+1)\blocksize$ $0$'s, they will be inserted into $s_i^{'4p\blocksize}$ and $s_i^{'4q\blocksize}$ for some $p$ and $q$ where one of them starts with $4l\blocksize$ $0$'s and another of them starts with $4(l+1)\blocksize$ $0$'s. By the following lemma and the fact that all the strings inserted here will be before the strings in case I, we know that at least one of them ends with $0$ so the number of runs increased is $0$.

\begin{lemma} \label{lem:notallzero}
For all possible choices of $p,q \in [(4\blocksize)^{i-1}]$, if there are $4\blocksize l$ $0$'s in the prefix of $s_i^{'4p\blocksize}$ and $4\blocksize(l+1)$ $0$'s in the prefix of $s_i^{'4q\blocksize}$ ($l \geq 1$), then at least one of them ends in $0$.
\end{lemma}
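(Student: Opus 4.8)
The plan is to push both hypotheses down one level of the recursion, from $s'_i$ to $t := s'_{i-1}$, and then invoke the fact that every maximal $0$-run of $t$ has length a power of $4\blocksize$. Concretely, I would group $s'_i$ into consecutive segments of length $4\blocksize$, so that the $j$-th segment equals $0^{4\blocksize}$ when $t[j]=0$ and equals $h'_i$ when $t[j]=1$, recalling that $h'_i$ begins and ends with the character $1$. Since a rotation point $4p\blocksize$ lands exactly on a segment boundary, $s_i^{'4p\blocksize}$ begins with segment $p+1$, segment $p+2$, and so on, so its maximal prefix of $0$'s has length exactly $4\blocksize$ times the length of the maximal $0$-run of $t$ starting at position $p+1$, while its last character is the last symbol of segment $p$, which is $0$ iff $t[p]=0$. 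Thus ``$s_i^{'4p\blocksize}$ has $4\blocksize l$ leading zeros'' translates to $t[p+1]=\cdots=t[p+l]=0$ and $t[p+l+1]=1$ (and similarly for $q$ with $l+1$ in place of $l$), while ``$s_i^{'4p\blocksize}$ ends in $0$'' translates to ``$t[p]=0$''. There is no cyclic wrap-around to worry about, because $s'_i$, hence $t$, ends in $1$, so a leading $0$-run of any of these rotations always meets a $1$ before running off the end.

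Next I would recall the observation (among those opening the proof of Lemma~\ref{lem:bwtmain}) that the length of any maximal $0$-run of $s'_j$ is $(4\blocksize)^k$ for some integer $k\ge 0$: such a run either lies inside one copy of $h'_j$, giving length $1$, or it is the image of a maximal $0$-run of $s'_{j-1}$, so its length is multiplied by $4\blocksize$, and it never merges with neighbouring zeros since the adjacent copies of $h'_j$ shield it with $1$'s.

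Finally I would argue by contradiction. If neither $s_i^{'4p\blocksize}$ nor $s_i^{'4q\blocksize}$ ended in $0$, then $t[p]=t[q]=1$; combined with the translated hypotheses, $t$ would then contain a maximal $0$-run of length exactly $l$ (flanked by the $1$'s at positions $p$ and $p+l+1$) and one of length exactly $l+1$. By the recalled fact, $l=(4\blocksize)^a$ and $l+1=(4\blocksize)^b$, so $(4\blocksize)^b-(4\blocksize)^a=1$, which is impossible since any two distinct powers of $4\blocksize$ differ by at least $4\blocksize-1\ge 3$. Hence $t[p]=0$ or $t[q]=0$, which is exactly the claim. I expect the only delicate part to be the boundary bookkeeping in the first step (the placement of `\$' and the wrap-around of the rotations); the combinatorial heart is merely that two consecutive integers cannot both be powers of $4\blocksize$.
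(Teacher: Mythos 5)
Your proof is correct and follows the same high-level route as the paper's: translate the hypotheses about leading-zero runs of $s_i^{'4p\blocksize}$ and $s_i^{'4q\blocksize}$ into statements about maximal $0$-runs in $s'_{i-1}$, then derive a contradiction from the assumption that both rotations end in $1$ (which forces $s'_{i-1}[p]=s'_{i-1}[q]=1$, making the two runs maximal of lengths exactly $l$ and $l+1$). The one genuine difference is that you invoke Observation~2 from the proof of Lemma~\ref{lem:bwtmain} --- every maximal $0$-run of $s'_j$ has length a power of $4\blocksize$ --- so that ``$l$ and $l+1$ are both powers of $4\blocksize$'' yields an immediate contradiction uniformly for all $l \geq 1$. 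The paper's own proof only uses that such maximal run lengths are multiples of $4\blocksize$, which forces it to treat $l=1$ as a separate case (with an added $\blocksize>2$ caveat); your argument sidesteps that case split and is cleaner.
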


\end{proof}

At last we prove the lemmas left.

\begin{proof}[Lemma~\ref{lem:leq4b}]
We prove by contradiction. Say if there exists $p$, $q$, $\alpha$, $\beta$ such that $s_i^{'p\blocksize+\alpha}$ and $s_i^{'q\blocksize+\beta}$ have common prefix longer than $5 \cdot 4\blocksize$ and $\alpha \neq \beta$.

\begin{figure}[!hbp]
  \centering
  \includegraphics[width=10cm]{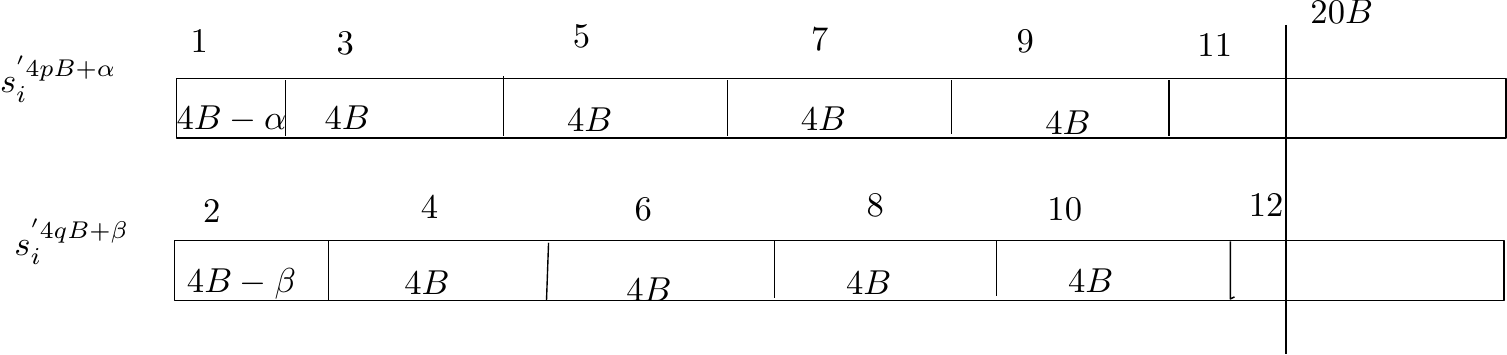}
  \caption{Alignment of $s_i^{'4p\blocksize+\alpha}$ and $s_i^{'4q\blocksize+\beta}$}\label{fig:align}
\end{figure}

According to the assumption the first $20\blocksize$ bits of them are all the same. However, we know that there are $<4\blocksize$ $0$'s at the start of the two strings. So segment 3 and segment 4 must be $h'_i$ corresponds to two $1$'s in $s'_{i-1}$. Since we know that $h'_i$ starts and ends with $1$, so segment 5 must be $h'_i$ as well. By the same argument we know that segment $6$ to segment $12$ are all $h'_i$. However there are at most 4 consecutive $1$'s in $s'_{i-1}$, so it is not possible.

For all possible choices of $p \in [(4\blocksize)^{i-1}]$ and $\alpha \in [4\blocksize]$, there are only $64\cdot 4\blocksize$ different kinds of prefixes of length $5\cdot 4\blocksize$ in $s_i^{'p\blocksize+\alpha}$. This is because these $5 \cdot 4\blocksize$ bits are derived by at most $6$ bits in $s_{i-1}$. The number of all the possible combinations of these $6$ bits is $2^6 = 64$. And $\alpha$ has $4\blocksize$ different choices, so the number of possible prefixes is $64 \cdot 4\blocksize = 256\blocksize$.
\end{proof}
\begin{proof}[Lemma \ref{lem:20beq}]
If the first $5\cdot 4\blocksize$ characters from $s^{'4p\blocksize+\alpha}_i$ and $s^{'4q\blocksize+\beta}_i$ are the same, and they are not start with more than 4b 0's.

Then by Lemma~\ref{lem:leq4b}, we know $\alpha=\beta$.

And we can also easily know that $s'_i[4p\blocksize+\alpha]$ and $s'_i[4q\blocksize+\beta]$ are generate by a same character. That is $s_{i-1}[p+1]$ and $s'_{i-1}[q+1]$, which are the (p+1)'s character and (q+1)'s character of $s'_{i-1}$, must be the same.

So the character $s'_i[4p\blocksize+\alpha]$ and $s'_i[4q\blocksize+\beta]$ are also the same. and $s'_i[4p\blocksize+\alpha]$ is the last character of $s_i^{'4p\blocksize+\alpha}$, $s'_i[4p\blocksize+\beta]$ is the last character of $s_i^{'4p\blocksize+\beta}$
\end{proof}

\begin{proof}[Lemma~\ref{lem:notallzero}]
We know that in $s'_{i-1}$, there are $l+1$ $0$'s in the prefix of  $s_{i-1}^{'p}$ and $l$ $0$'s in the prefix of $s_{i-1}^{'q}$. When $l \geq 2$, we know that they must be derived by a $0$ in $s'_{i-2}$. However, we know that $4\blocksize | l$ and $4\blocksize | l+1$ can not hold simultaneously, so at least one of them must end with $0$.

If $l=1$, we know that $s_i^{'4q\blocksize}$ starts with $8\blocksize$ $0$'s. It must be from a $0$ in $s'_{i-2}$, however, when $\blocksize > 2$ we know that $s_i^{'4q\blocksize}$ ends with $0$.
\end{proof}

\end{document}